\begin{document}
\title{LTL Fragments are Hard for\\ Standard Parameterisations}

\author{\IEEEauthorblockA{Martin Lück and Arne Meier}
\IEEEauthorblockA{Institut für Theoretische Informatik\\
Leibniz Universität Hannover\\
Appelstr.~4, 30176 Hannover, Germany\\
Email: \url|{lueck,meier}@thi.uni-hannover.de|}}

\maketitle

\begin{abstract}
We classify the complexity of the LTL satisfiability and model checking problems for several standard parameterisations.
The investigated parameters are temporal depth, number of propositional variables and formula treewidth, resp., pathwidth. We show that all operator fragments of LTL under the investigated parameterisations are intractable in the sense of parameterised complexity.
\end{abstract}


\section{Introduction}
In the last decade the research on parameterised complexity of problems increased significantly. Beyond the foundations by Downey and Fellows \cite{dofe99} until today several deep algorithmic techniques have been introduced and new approaches have been made; so it really is a highly prospering area of research (e.g., see for an overview of the current evolution the recent book of Downey and Fellows \cite{df13}).  Essentially the main approach is to detect a parameterisation for a given problem and achieve a runtime which is independent of the parameter. For instance, given the problem of propositional satisfiability a quite naïve parameterisation is the number of variables (of the given formula $\phi$). Then one can easily construct a deterministic algorithm solving the problem in time $2^{k}\cdot|\phi|$ where $k$ is the number of variables in $\phi$. If $k$ is assumed to be fixed then this yields a polynomial runtime wherefore one says that this problem is \emph{fixed-parameter tractable} ($\FPT$). In general, a problem is in $\FPT$ if there exists a deterministic algorithm solving the problem for any instance in $f(k)\cdot\textit{poly}(n)$ steps where $k$ is the parameter, $n$ is the input length, and $f$ is an arbitrary computable function; another name for this class is para-P. In contrast with this, runtimes of the form $n^{f(k)}$ are highly undesirable as the runtime depends on the parameter's value---this is the runtime of algorithms in the class XP. Further some kind of parameterised intractability hierarchy is built between FPT and XP, namely the W-hierarchy. It is known that $\FPT\subseteq\W1\subseteq\W2\subseteq\cdots\subseteq\XP$ but not if any of these inclusions is strict. For proving hardness results with respect to the classes of the W-hierarchy one usually uses fpt-reductions which translate, informally speaking, the instances in the usual sense from one parameterised problem to another but also take care of maintaining the parameter's value. Hence showing $\W1$-hardness of a problem yields the unlikeliness of it to be fixed-parameter tractable. Also classes like para-NP---similarly to para-P but using non-deterministic algorithms instead---or para-PSPACE have been introduced. The first contains the W-hierarchy and is by itself contained in para-PSPACE. Further they are incomparable to XP---under reasonable complexity class inclusion assumptions.

While the parameterised complexity theory is heavily built on top of logic, its application is relatively new in the context of logic itself. Not many significant parameterisations are known yet. Recent approaches were modal, resp., temporal operator nesting depth \cite{prav13, pc-ctl15} or various types of treewidth, like primal or incidence treewidth of CNF formulas \cite{ss06}. Our treewidth parameter is a further generalisation and can be measured on general syntax trees of formulas and not only on CNFs.

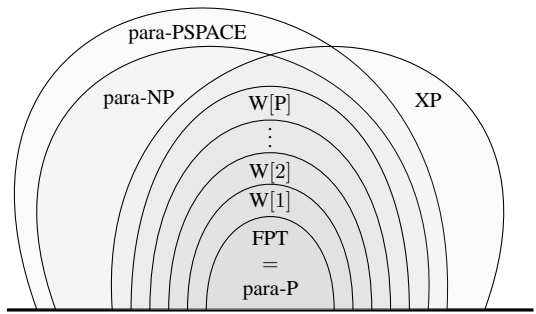
\begin{figure}
\centering
\begin{tikzpicture}[y=.35cm,x=.5cm, every node/.style={color=black,fill opacity=1}]


\draw[very thick] (7,0) -- (-7,0);

\clip (-7,-.1) rectangle (7,12);

{

\tikzset{every path/.style={draw=black,fill=black!20,fill opacity=0.1}}

\path (-6.2,0) edge [in=87,out=110,looseness=2.6] (4.7,0);

\path (-5.7,0) edge [in=86,out=110,looseness=2.5] (4.2,0);

\path (-4.2,0) edge [in=70,out=94,looseness=2.5] (5.7,0);

\path (-3.7,0) edge [in=90,out=90,looseness=2.75] (3.7,0);

\path (-3.2,0) edge [in=90,out=90,looseness=2.7] (3.2,0);

\path (-2.7,0) edge [in=90,out=90,looseness=2.65] (2.7,0);

\path (-2.2,0) edge [in=90,out=90,looseness=2.6] (2.2,0);

\path (-1.7,0) edge [in=90,out=90,looseness=2.5] (1.7,0);

}

\node[scale=0.8] at (0,1.7) {
	\begin{tabular}{c}
	$\FPT$\\
	$=$\\
	para-$\P$
	\end{tabular}
};
\node[scale=0.8] at (0,4.1) {$\W{1}$};
\node[scale=0.8] at (0,5.2) {$\W{2}$};
\node[scale=0.8] at (0,6.8) {$\vdots$};
\node[scale=0.8] at (0,7.8) {$\W{\P}$};
\node[scale=0.8] at (-3.5,8) {para-$\NP$};
\node[scale=0.8] at (-2.2,10.5) {para-$\PSPACE$};
\node[scale=0.8] at (4.2,8) {$\XP$};

\end{tikzpicture}
\caption{Parameterised complexity classes.}\label{fig:paraCompClasses}
\end{figure}

Temporal logic is a well-known and very important area relevant in many fields of research, e.g., program verification and artificial intelligence. Its origins are traceable even to greek philosophers yet it founds its introduction by Arthur Prior in the 1960s \cite{pr57}. Conceptually its main ingredients are combinations of a universal or existential \emph{path quantifier} together with \emph{temporal operators} referring to specific or vague moments of time, e.g., \emph{next}, \emph{globally}, \emph{future}, \emph{until}. Depending on how these quantifiers and operators may be combined the three most important logics have been defined. In \emph{Computation Tree Logic} (or short CTL) one is allowed to use only pairs of a single path quantifier and a single temporal operator; in \emph{Linear Temporal Logic} (LTL) one uses only temporal operators and has a single existential (or universal, depending on the definition of the logic) path quantifier in front of the formula; in the \emph{Full Branching Time Logic} (CTL$^{*}$) any arbitrary combination is allowed. After a decade of seminal work from Allen Emerson, Clarke, Halpern, Schnoebelen, and Sistla \cite{clem81,emha85,sicl85,sch02}---to name only a few---the most important concepts, e.g., satisfiability and model checking have been well understood and classified with respect to their computational complexity. Recently the mentioned decision problems have been investigated in the light of a study which considers fragments of the logics in the sense of allowed operators \cite{bamuscscscvo11,mmtv09}. 

In this paper we focus on the logic LTL and its $\PSPACE$-complete model checking. We want to investigate its parameterised complexity under the mentioned parameterisations of operator fragments.

\subsection{Related Work} The parameterised complexity of modal logic satisfiability has been investigated by Praveen recently \cite{prav13}. He considered treewidth of some CNF-centered graph representation structure. In a work of Szeider from 2004 he discusses different parameterisation approaches with respect to propositional satisfiability \cite{sz04}. In particular, he explains how to obtain primal graphs and other structural parameterisations. Recently, \citeauthor{pc-ctl15} classified the parameterised complexity of satisfiability for the computation tree logic CTL \cite{pc-ctl15}. Essentially we follow the used parameterisations from \citeauthor{pc-ctl15} in this paper. The results comply with the results of \citeauthor{bamuscscscvo11} who investigated the existential version of LTL \cite{bamuscscscvo11}. Outside the context of parameterised complexity theory, many simple cases of LTL were studied by \citeauthor{demri_complexity_2002} \cite{demri_complexity_2002}.

\subsection{Organization of this paper} At first we will define the relevant notions and terms around temporal logic, parameterised complexity, and our used structural parameterisations. Then in Section~\ref{sec:results} we will present our classification for the different parameterisations and decision problems. We start with satisfiability and the main part is about model checking. Finally in Section~\ref{sec:conclusion} we will conclude and give an outlook on further steps.

\section{Preliminaries}

\subsection{Temporal Logic}
Usually temporal logic is defined through Kripke semantics. In the following we will briefly introduce the notion around it. For a deeper introduction, we refer the reader to the survey article from \citeauthor{mmmv12} \cite{mmmv12}. Formally a \emph{Kripke structure} $\calA=(W,R,V)$ is a finite set $W$ of worlds (or states), $R\colon W\times W$ is a total transition relation (i.e., for every $w\in W$ there exists a $w'\in W$ such that $wRw'$ holds), and $V\colon W\to 2^{\PROP}$ is a valuation function assigning sets of propositions to states, where $\PROP$ is a finite set of propositions. A \emph{path} $\pi=p_{0}p_{1}\dots$ is an infinite sequence of states such that $p_{i}Rp_{i+1}$ holds for $i\in\mathbb N$. A path starting at the root of a model is also called a \emph{run}. For $w \in W$ write $\Pi(w)$ for the set of all paths starting in $w$. Write $\pi[i]$ for the world $p_{i}$ and $\pi^{i}$ for the suffix path $p_{i}p_{i+1}\dots$ of $\pi$.
The set of all well formed linear temporal logic formulas $\LTLform$ is defined inductively via the following grammar in BNF
\[
\varphi\ddfn 
p\mid
\lnot\varphi\mid
\varphi\land\varphi\mid
\X\varphi\mid
\F\varphi\mid
\varphi\U\varphi,
\]
where $p\in\PROP$. 

Now let $\calA=(W,R,V)$ be a Kripke structure, $\varphi,\psi$ be two $\LTLform$-formulas, and $\pi$ be a path in $\calA$. Then it holds that
$$
\begin{array}{lcl}
 (\calA,\pi)\models p &\text{iff}& p\in V(\pi[0]),\\
 (\calA,\pi)\models \lnot\varphi &\text{iff}& (\calA,\pi)\not\models \varphi,\\
 (\calA,\pi)\models \varphi\land\psi &\text{iff}& (\calA,\pi)\models \varphi\text{ and }(\calA,\pi)\models\psi,\\ 
 (\calA,\pi)\models \X\varphi &\text{iff}& (\calA,\pi^{1})\models\varphi,\\
 (\calA,\pi)\models \F\varphi &\text{iff}& \exists k\ge 0:(\calA,\pi^{k})\models\varphi,\\
 (\calA,\pi)\models \varphi\U\psi &\text{iff}& \exists k\ge 0\text{ such that } \forall j<k:\\
 &&(\calA,\pi^{j})\models\varphi\text{ and }(\calA,\pi^{k})\models\psi.
\end{array}
$$

The usual shortcuts are obtained by combinations of these operators, e.g., $\G\varphi\equiv\lnot\F\lnot\varphi$, or $\varphi\to\psi\equiv\lnot\varphi\lor\psi$. If $T\subseteq\{\X,\F,\G,\U\}$ is a set of temporal operators, then $\LTLform(T)$ is the restriction of $\LTLform$ to formulas containing only temporal operators from $T$. In the following we consider the relevant decision problems with respect to a fix set of temporal operators $T$ for this paper.

\problemdef
{$\LTLSAT(T)$}
{$\LTLform(T)$-formula $\varphi$}
{Does there exist a Kripke structure $\calA$ and a path $\pi$ in $\calA$ such that $(\calA,\pi)\models\varphi$?}

\problemdef
{$\LTLMC(T)$}
{A Kripke structure $\calA$, a world $w\in W$ and $\LTLform(T)$-formula $\varphi$}
{$(\calA,\pi)\models\varphi$ for all paths $\pi \in \Pi(w)$?}

Note that the model checking problem is also sometimes defined with \emph{existential} semantics, \ie, at least one path has to satisfy the formula, which leads to complementary complexity results. 

\begin{define}
The \emph{temporal depth} $\td(\cdot)$ of an LTL formula is defined recursively:
\begin{alignat*}{4}
  &\td(x) \; & &:= 0 \text{ if } x \in \PROP, &  & \td(\neg \varphi) \; & &:= \td(\varphi),\\
  &\td(\varphi \wedge \psi) \; & &:= \max\{\td(\varphi), \td(\psi)\}, & \qquad &  & &\\ 
  &\td(T \varphi) \; & &:= \td(\varphi) + 1 \qquad \mathrlap{\text{ for } T \in \{ \X, \F, \G \},}& \qquad & & &\\
  &\td(\varphi \U \psi) \; & &:= \max\{\td(\varphi), \td(\psi)\}+1. & \qquad &  & &
\end{alignat*}
\end{define}

\begin{define}
Define $\LTLlim{c}(T)$ as the fragment of $\LTLform(T)$ which has temporal depth at most $c$. Define $\LTLMClim{c}(T)$, resp., $\LTLSATlim{c}(T)$ as the model checking, resp., satisfiability problem restricted to formulas in $\LTLlim{c}(T)$.
\end{define}

\subsection{Parameterised Complexity}

\begin{define}[Parameterised problem]
Let $Q \subseteq \Sigma^*$ be a decision problem and let $\kappa \colon \Sigma^* \to \N$ be a computable function. Then we call $\kappa$ a \emph{parameterisation of $Q$} and the pair $\Pi = (Q, \kappa)$ a \emph{parameterised problem}.
\end{define}

\begin{define}[Fixed-parameter tractable]
Let $\Pi = (Q, \kappa)$ be a parameterised problem. If there is a deterministic Turing machine $M$ and a computable function $f\colon \N \to \N$ \st for every instance $x \in \Sigma^*$
\begin{itemize}
  \item $M$ decides correctly if $x \in Q$, and
  \item $M$ has a runtime bounded by $f(\kappa(x)) \cdot \size{x}^{\bigO{1}}$
\end{itemize}
then we say that $M$ is an \emph{fpt-algorithm for $\Pi$} and that $\Pi$ is \emph{fixed-parameter tractable}.
We define $\FPT$ as the class of all parameterised problems that are fixed-parameter tractable.
\end{define}

Similarly, we refer to a function $f$ as \emph{fpt-computable \wrt a parameter $\kappa$} if there is another computable function $h$ such that $f(x)$ can be computed in time $h(\kappa(x)) \cdot \size{x}^\bigO{1}$.

\begin{define}[fpt-reduction]
Let $(P, \kappa)$ and $(Q, \lambda)$ be parameterised problems over alphabets $\Sigma$, resp., $\Delta$. Then a function $f \colon \Sigma^* \to \Delta^*$ is an \emph{fpt-reduction} if it is fpt-computable \wrt $\kappa$ and there is a computable function $h \colon \N \to \N$ \st the following holds \fa $x \in \Sigma^*$:

\begin{itemize}
  \item $x \in P \iff f(x) \in Q$ and
  \item $\lambda\big(f(x)\big) \leq h\big(\kappa(x)\big)$, \ie, $\lambda$ is bounded by $\kappa$.
\end{itemize}

If there is an fpt-reduction from $(P, \kappa)$ to $(Q, \lambda)$ for parameterised problems $(P, \kappa)$ and $(Q, \lambda)$ then we call $(P, \kappa)$ \emph{fpt-reducible to} $(Q, \lambda)$, denoted by $(P, \kappa) \leqfpt (Q, \lambda)$.
\end{define}


\begin{define}[\W{1}]
The class $\W{1}$ is the class of parameterised problems $(Q, \kappa)$ such that $(Q, \kappa)$ can be fpt-reduced to the \emph{Short Single-Tape Turing Machine Halting Problem}:
\paraproblemdef
{SSTMH}
{Non-deterministic single-tape Turing machine $M$, Integer $k$}
{Does $M$ accept the empty string in at most $k$ steps?}
{$k$}
\end{define}

\begin{define}[Parameterised hardness]
A problem $(P, \kappa)$ is $\mathcal{C}$-\emph{hard under fpt-reduc\-tions} for a parameterised complexity class $\mathcal{C}$ if $(Q, \lambda) \in \mathcal{C}$ implies $(Q, \lambda) \leqfpt (P, \kappa)$. If additionally $(P, \kappa) \in \mathcal{C}$, we say that $(P, \kappa)$ is $\mathcal{C}$\emph{-complete under fpt-reductions}.
\end{define}

\begin{define}[$\W{\P}$]\label{def:wp}
The class $\W{\P}$ contains the parameterised problems $(Q,\kappa)$ for which there is a computable function $f$ and an NTM deciding if $x \in Q$ holds in time $f(\kappa(x)) \cdot \size{x}^\bigO{1}$ with at most $\bigO{\kappa(x) \cdot \log \size{x}}$ non-deterministic steps.
\end{define}

\citeauthor{flumGroheParaC} state how to obtain parameterised variants of classical complexity classes \cite{flumGroheParaC}.
They define for ``standard'' complexity classes $\calC$ the corresponding parameterised versions para-$\calC$. Here, ``standard'' means that the class $\calC$ is defined via usual resource-restricted Turing machines. 
For most such classes $\calC$ we obtain para-$\calC$ by simply appending an additional factor $f(\kappa)$ to the resource bound, as done for $\P$ leading to $\FPT$. This is possible for certain classes that \citeauthor{flumGroheParaC} call \emph{robust}, such as $\NP$ and $\PSPACE$. This allows the following definitions:

\begin{define}[para-$\NP$]
The class para-$\NP$ contains the parameterised problems $(Q,\kappa)$ for which there is a computable function $f$ and an non-deterministic Turing machine deciding if $x \in Q$ holds in time $f(\kappa(x)) \cdot \size{x}^\bigO{1}$.
\end{define}

\begin{define}[para-$\PSPACE$]
The class para-$\PSPACE$ contains the parameterised problems $(Q,\kappa)$ for which there is a computable function $f$ and a deterministic Turing machine deciding if $x \in Q$ holds in space $f(\kappa(x)) \cdot \size{x}^\bigO{1}$.
\end{define}

\begin{define}[Slice]
The $\ell$-th \emph{slice} of a parameterised problem $(Q, \kappa)$ is denoted with $(Q, \kappa)_\ell$ and defined as:
\[
(Q, \kappa)_\ell \dfn \Set{ x \mid x \in Q \text{ and } \kappa(x) = \ell }
\]
\end{define}

\begin{proposition}[\cite{flumGroheParaC}]
\label{thm:slice-hardness}For \emph{$\calC \in \{ \NP, \coNP, \PSPACE \}$} it holds that a parameterised
problem $(Q, \kappa)$ is hard for \emph{para-}$\calC$ if and only if a finite union
of slices of $(Q, \kappa)$ is hard for $\calC$.
\end{proposition}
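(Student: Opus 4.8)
The plan is to prove the two implications separately, in both cases exploiting that each $\calC \in \{\NP,\coNP,\PSPACE\}$ is robust and has a problem that is complete under polynomial-time many-one reductions (e.g.\ SAT, its complement, or TQBF). Throughout, $\calC$-hardness refers to hardness under polynomial-time many-one reductions and para-$\calC$-hardness to hardness under $\leqfpt$.

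For the direction from para-$\calC$-hardness to $\calC$-hardness of a finite slice union, I would first fix a $\calC$-complete problem $A$ and equip it with the constant parameterisation $\kappa_A \equiv 0$. A $\calC$-machine for $A$ trivially respects the bound $f(\kappa_A(x))\cdot\size{x}^{\bigO{1}}$, so $(A,\kappa_A)\in$ para-$\calC$, and the assumed para-$\calC$-hardness of $(Q,\kappa)$ yields an fpt-reduction $g$ with $(A,\kappa_A)\leqfpt(Q,\kappa)$. The key point is that on a constant parameterisation both fpt-guarantees degenerate: $g$ runs in time $h(\kappa_A(x))\cdot\size{x}^{\bigO{1}}=h(0)\cdot\size{x}^{\bigO{1}}$, hence in polynomial time, and $\kappa(g(x))\le\ell_0$ for the constant $\ell_0\dfn h(0)$. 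Thus $g$ is an ordinary polynomial-time reduction whose image always lands in the finitely many slices $F\dfn\Set{0,\dots,\ell_0}$, so $g(x)\in Q\iff g(x)\in\bigcup_{\ell\in F}(Q,\kappa)_\ell$, and since $A$ is $\calC$-hard the finite union $\bigcup_{\ell\in F}(Q,\kappa)_\ell$ inherits $\calC$-hardness.

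For the converse, assume $S\dfn\bigcup_{\ell\in F}(Q,\kappa)_\ell$ is $\calC$-hard for a finite $F$, and let $(P,\lambda)\in$ para-$\calC$ be arbitrary; I must produce an fpt-reduction to $(Q,\kappa)$. The first step is a precomputation normal form: from the para-$\calC$-machine $M$ for $P$, with resource bound $f(\lambda(x))\cdot\size{x}^{c}$, define the classical language $O$ of all pairs $(x,0^{m})$ for which $M$, run with resource bound $m\cdot\size{x}^{c}$, accepts $x$ (for $\coNP$ one takes the dual condition). Robustness of $\calC$ is exactly what guarantees $O\in\calC$, since on input $(x,0^{m})$ the simulation stays within the $\calC$-resource measured in the \emph{input} length. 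Then $e(x)\dfn(x,0^{f(\lambda(x))})$ is fpt-computable \wrt $\lambda$ and satisfies $x\in P\iff e(x)\in O$. Since $O\in\calC$ and $S$ is $\calC$-hard, there is a polynomial-time many-one reduction $r$ from $O$ to $S$; as $r$ controls only membership in $S$ and not the slice of its output, I would patch it into $r'$ returning $r(z)$ when $\kappa(r(z))\in F$ and otherwise a fixed no-instance $y_0$ with $\kappa(y_0)\in F$. A short case analysis then gives $z\in O\iff r'(z)\in Q$ with always $\kappa(r'(z))\le\max F$, so $r'\circ e$ is the desired fpt-reduction from $(P,\lambda)$ to $(Q,\kappa)$, and $(Q,\kappa)$ is para-$\calC$-hard.

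I expect the main obstacle to lie in the converse direction, specifically in the precomputation normal form: one must argue that absorbing the parameter-dependent factor $f(\lambda(x))$ into a unary advice string genuinely leaves a language in $\calC$. This is precisely where robustness of $\calC\in\{\NP,\coNP,\PSPACE\}$ is indispensable and would fail for classes not closed under the relevant resource padding. The patching of $r$ and the existence of a suitable no-instance $y_0$ are comparatively minor; the latter only requires a non-degeneracy remark, since a $\calC$-hard $S$ must possess a no-instance, and if none had parameter in $F$ then $S$ would coincide with $\Set{y \mid \kappa(y)\in F}$ and hence be decidable from $\kappa$ alone.
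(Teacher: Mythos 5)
The paper does not prove this proposition at all: it is imported verbatim from \citeauthor{flumGroheParaC} \cite{flumGroheParaC}, so the only meaningful comparison is with the standard argument in that source, and your proposal is essentially that argument — the forward direction specialises an fpt-reduction from a classically $\calC$-complete problem carrying a constant parameterisation, and the backward direction uses the precomputation normal form $e(x)=(x,0^{f(\lambda(x))})$, robustness of $\calC$ to place the padded language $O$ in $\calC$, a classical reduction $r$ from $O$ to the slice union, and a patched $r'$ to control the output parameter. Both directions are correctly structured, and your identification of robustness as the load-bearing ingredient is exactly right.

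The one step whose justification does not hold as written is the existence of the no-instance $y_0$. Your fallback remark — that otherwise $S=\{y \mid \kappa(y)\in F\}$ would be ``decidable from $\kappa$ alone'' — does not refute $\calC$-hardness: under this paper's definitions $\kappa$ is only required to be computable, so such an $S$ can perfectly well be $\calC$-hard (take $\kappa$ to be the characteristic function of a SAT encoding), and even for polynomial-time computable $\kappa$ the inference ``$S\in\P$, hence not $\calC$-hard'' needs the unproven assumption $\P\neq\calC$. The repair is simpler than your argument: the patch fires only when $\kappa(r(z))\notin F$, and in that case $z\notin O$ is already certified (since $z\in O$ forces $r(z)\in S$ and hence $\kappa(r(z))\in F$), so \emph{any} fixed $y_0\notin Q$ works, its parameter merely joining $\max F$ in the constant bound; the truly degenerate case $Q=\Sigma^*$ admits no reduction from, e.g., the empty problem and is excluded by the usual nontriviality convention under which the result is stated. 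Finally, note that evaluating the guard $\kappa(r(z))\in F$ inside $r'$ silently assumes $\kappa$ is polynomial-time computable — true in \cite{flumGroheParaC}, where parameterisations are polynomial-time computable by definition, but not guaranteed by this paper's weaker ``computable'' requirement; the same caveat applies to computing $\lambda(x)$ when forming $e$.
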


\begin{define}[Complement Class]
For a complexity class $\calC$, define co$\calC$ as the class of problems for which their complement is in $\calC$. 
\end{define}

\subsection{Structural Treewidth and Pathwidth}
The \emph{treewidth} of a graph is a parameter that leads to $\FPT$ algorithms for a wide range of otherwise intractable graph problems. In fact, only few known graph problems stay hard on trees. The treewidth of a graph is in this sense a measure of its ``tree-alikeness''.

A \emph{path-decomposition} $P$ of a structure $\mathcal A$ is similarly defined to tree-decompositions however $P$ has to be a path. Here $\pw(\mathcal A)$ denotes the \emph{pathwidth} of $\mathcal A$. Likewise the size of the pathwidth describes the similarity of a structure to a path. Observe that pathwidth bounds treewidth from above.

Given a structure $\mathcal A$ we define a \emph{tree-decomposition of $\mathcal A$} (with universe $A$) to be a pair $(T,X)$ where $X=\{B_{1},\dots,B_{r}\}$ is a family of subsets of $A$ (the set of \emph{bags}), and $T$ is a tree whose nodes are the bags $B_{i}$ satisfying the following conditions:
\begin{enumerate}
 \item Every element of the universe appears in at least one bag: $\bigcup X=A$.
 \item Every Tuple is contained in a bag: for each $(a_{1},\dots,a_{k})\in R$ where $R$ is a relation in $\mathcal A$, there exists a $B\in X$ such that $\{a_{1},\dots,a_{k}\}\in B$.
 \item For every element $a$ the set of bags containing $a$ is connected: for all $a\in A$ the set $\{B\mid a\in B\}$ forms a connected subtree in $T$. (For \emph{path-decompositions} is has to form a connected path.)
\end{enumerate}

\begin{define}[Graph treewidth and pathwidth]
The \emph{width} of a tree-decomposition $\calT$ is its maximal bag size minus one. The \emph{treewidth} of a graph $G$ is the minimal width of a tree-decomposition of $G$, and its \emph{pathwidth} is the minimal width of a path-decomposition.
\end{define}

\begin{define}[Syntactical structure of formulas]
We associate a formula $\varphi$ with a graph $\calS(\varphi)$, resp., $\calS_\varphi$ which represents the formula.
The map $\calS$ is defined as follows: Let $\calS(\varphi) \dfn (\SFstar{\varphi}, E)$. The set $\SF{\varphi}$ is the
set of all syntactically valid subformulas of $\varphi$ (counting equal subformulas multiple times if necessary). $\SFstar{\varphi}$ is then obtained from $\SF{\varphi}$ by identifying nodes which represent the same propositional variable inside $\varphi$. The edge set $E$ is obtained from $\varphi$ by connecting each pair $(\psi,\psi') \in \SFstar{\varphi}\times\SFstar{\varphi}$ for which $\psi'$ is a maximal strict subformula inside $\psi$.
\end{define}

We assume a well-defined association with parentheses \st every node of $\calS(\varphi)$ represents exactly one Boolean function or temporal operator and its children represent its arguments.
Then $\calS$ can be interpreted as a ``graphical'' representation of $\varphi$ in the sense of a syntax tree. Merging the leaves with the same propositional variables then leads to a cyclic graph. The motivation of using the ``syntax graph'' treewidth is that independent subformulas, \ie, subformulas without common variables, intuitively can be handled independently of each other. If many subformulas are connected by common variables this is reflected by a high treewidth.

\begin{define}[Formula treewidth and pathwidth]
For a formula $\varphi$ its treewidth $\tw(\varphi)$, resp., pathwidth $\pw(\varphi)$ is simply defined as $\tw(\calS_\varphi)$, resp., $\pw(\calS_\varphi)$.
\end{define}

The syntax graph as defined above is a generalisation of the \emph{incidence graph} of a CNF formula, and the incidence graph of a CNF is contained as a graph minor in its syntax graph: Simply merge all propositional variables with its negations, then for every clause contract all edges that belong to it. Then delete the disjunction nodes above the clauses. Therefore the structural treewidth is an upper bound for the incidence treewidth, the same holds for the pathwidth. This implies that all hardness results regarding the structural treewidth or pathwidth also hold for the incidence treewidth or pathwidth.

\section{Parameterised Complexity Results}\label{sec:results}

\subsection{Satisfiability}

\begin{theorem}\label{thm:ltl-hardness}
For $\F \in T$, $\G \in T$ or $\U \in T$, the problems \emph{$(\LTLSAT(T), \td + \pw_\varphi)$} and \emph{$(\LTLSAT(T), \td + \tw_\varphi)$} are $\W{1}$-hard, \ie, \emph{$\LTLSAT(T)$} parameterised by temporal depth together with syntactical pathwidth, resp., treewidth.
\end{theorem}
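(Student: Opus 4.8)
The plan is to establish $\W{1}$-hardness by an fpt-reduction from a canonical $\W{1}$-complete problem. A convenient choice is \emph{Multicolored Clique} (whose $\W{1}$-completeness is classical, and which in turn reduces from the machine problem SSTMH used to define $\W{1}$ above): given a graph $G=(V,E)$ whose vertices are partitioned into colour classes $C_1,\dots,C_k$, decide whether $G$ contains a clique using exactly one vertex from each class, with parameter $k$. Since we target the satisfiability problem, the Kripke structure is existentially quantified, so I will let the structure together with a run through it \emph{be} the witness: a satisfying ultimately periodic path will spell out a choice $v_1\in C_1,\dots,v_k\in C_k$ together with a certificate that all pairs are adjacent. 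I will carry out the construction using only $\F$ and $\G$ (note that $\G\varphi$ is available in the $\{\F\}$-fragment as $\lnot\F\lnot\varphi$), so that the resulting formula lies in $\LTLform(\{\F\})$, and then transfer it to the remaining fragments.

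Concretely, I introduce $O(k)$ \emph{phase markers} $c_1,\dots,c_k$ that partition the path into $k$ consecutive phases, and I encode the vertex chosen in phase $i$ \emph{positionally}, i.e.\ by the place on the path at which a designated marker occurs, rather than by a block of simultaneously-true bits. The global shape of the path is pinned down by $\G$-formulas of bounded temporal depth (``at every position at most one phase marker holds'', ``phase $i$ is eventually followed by phase $i+1$'', and so on), while $\F$ asserts the existence of the positions witnessing the selected vertices. Pairwise adjacency is enforced by a conjunction, over all colour pairs $(i,j)$, of $\G$-implications stating that the vertices read in phases $i$ and $j$ are compatible with $E$. Because each such constraint nests $\F$ inside $\G$ only to a fixed depth, the temporal depth of the whole formula is $O(1)$, or at worst $O(k)$ in the positional encoding, which is still bounded by the parameter.

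Transferring to the other fragments is routine once the $\{\F\}$-construction is in place. For $T\ni\G$ I replace every $\F\varphi$ by $\lnot\G\lnot\varphi$; for $T\ni\U$ I use $\F\varphi\equiv(p\lor\lnot p)\U\varphi$ and $\G\varphi\equiv\lnot\big((p\lor\lnot p)\U\lnot\varphi\big)$. Each of these rewrites leaves the temporal depth unchanged and enlarges the syntax graph only by unary $\lnot$-nodes and a constant-size $(p\lor\lnot p)$ gadget per operator, so both $\td$ and $\pw$ (resp.\ $\tw$) grow by at most a constant. Hence a single reduction yields $\W{1}$-hardness for every $T$ containing $\F$, $\G$, or $\U$, simultaneously for the pathwidth and the treewidth parameterisation.

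The main obstacle is the bound on the syntactical pathwidth. The difficulty is that the formula must encode the graph $G$, which itself has unbounded treewidth, yet the syntax graph $\calS_\varphi$ must have pathwidth bounded by a function of $k$ alone; a flat CNF-style listing of the non-edges of $G$ would merge each vertex proposition into linearly many clauses and blow the width up to $\Theta(\mathrm{poly}(n))$. The resolution, and the technical heart of the proof, is the positional encoding of the vertices together with a \emph{layered} arrangement of the adjacency checks: I order the gadgets so that a left-to-right sweep of $\calS_\varphi$ keeps only an $O(f(k))$-sized interface of shared propositions live at any cut, the complete interaction among the $k$ phases accounting for the width. Exhibiting an explicit path-decomposition of this width, and verifying that identifying equal propositional leaves does not create wide bags, is the step I expect to require the most care; everything else reduces to checking that the bounded-depth $\F/\G$ formulas correctly express the Multicolored Clique constraints.
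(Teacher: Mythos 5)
There is a genuine gap, and it sits exactly where you yourself predicted it would: the pathwidth bound. Your source problem, Multicolored Clique, carries only the parameter $k$, so the entire edge relation $E$ of an arbitrary graph must be written into the formula, while the syntactical pathwidth must nevertheless stay bounded by a function of $k$ alone. Under your positional encoding, the only way the formula can refer to the position chosen in phase $i$ is through propositions labelling those positions: with temporal depth bounded by the parameter you cannot address one of $n$ positions by operator nesting, and $\F$ is reflexive anyway. So the adjacency check for a colour pair $(i,j)$ becomes, in one form or another, a disjunction over the edges between $C_i$ and $C_j$ (or a conjunction of implications over the non-edges --- same problem with the complement graph), in which the proposition $x_u$ for a vertex $u$ occurs once per incident edge. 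After the leaves for equal propositions are merged, each edge gadget connects $x_u$ and $x_v$ by a short path inside $\calS_\varphi$, so $\calS_\varphi$ contains a subdivision of $G$ (restricted to inter-class edges) as a topological minor, whence $\pw(\calS_\varphi) \geq \tw(\calS_\varphi) \geq \tw(G)$, which is unbounded in $k$ on hard instances. This is why ``ordering the gadgets'' cannot save you: pathwidth is invariant under the textual order of conjuncts; the width is forced by which variable leaves are shared, not by the sweep order. Switching to an $O(\log n)$-bit binary encoding of vertices does not help either, since then the parameter of the produced instance is not bounded by any function of $k$, violating the fpt-reduction condition $\lambda(f(x)) \leq h(\kappa(x))$. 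In short, the step you defer as ``requiring the most care'' is not a technicality to be checked at the end --- it is the crux, no natural implementation of it exists, and your sketch contains no mechanism that would produce one.

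The paper sidesteps this wall by choosing a different source problem: Praveen's p-PW-SAT, which is $\W{1}$-hard with parameter $k$ \emph{plus the pathwidth of the primal graph of the input CNF}. There the combinatorially wild part of the instance comes with a parameter-bounded decomposition for free, and the only constraints needing new machinery are the saturation (cardinality) constraints, which the reduction implements temporally: a path through the model visits the variables in order, per-partition counters $\top^j_p$, $\bot^j_p$ count the selected ones and zeros, and a target formula checks the capacities; each counter proposition interacts only with its neighbours at indices $j \pm 1$, so the pathwidth of $\psi(I)$ stays $\kappa$-bounded. If you insist on Multicolored Clique as the source, you would in effect have to re-prove something like Praveen's theorem along the way. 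Your final paragraph on transferring between fragments (via $\F\alpha \equiv \neg\G\neg\alpha$ and $\F\alpha \equiv \top\U\alpha$, preserving temporal depth and changing the syntax graph only by constant-size local gadgets) is correct and matches the paper.
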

\begin{proof}
The result is proven by an fpt-reduction from the parameterised problem p-PW-SAT which was shown to be $\W{1}$-hard by Praveen \cite{prav13}.
An instance of p-PW-SAT is a tuple $I = \left(\varphi,\, k,\, (Q_i)_{i \in [k]},\, (C_i)_{i \in [k]}\right)$ where $\varphi$ is a propositional formula in CNF with variables $\{ q_1, \ldots, q_n \}$. The variables are partitioned into pairwise disjoint subsets $\{ Q_1, \ldots, Q_k\}$. The values $C_i \in \N$ are the \emph{capacities} of the partitions, \ie, the exact number of variables in $Q_i$ that must be set to true which is the \emph{weight} of the partition. An assignment is called \emph{saturated} if every partition has weight equal to its capacity.
For an instance $I$ we say that $I \in$ p-PW-SAT if $\varphi$ has an assignment that is both satisfying and saturated.
The parameter of p-PW-SAT is $\kappa(I) \dfn k + \pw(G_\varphi)$ where $G_\varphi$ is the primal
graph of
the CNF $\varphi$. The primal graph of a CNF is the graph that contains all propositions as vertices and edges for those variables that occur together in a clause.
For the reduction, we consider an LTL formula $\psi(I) \in \LTLform(\F,\G)$ that has constant temporal depth and $\kappa$-bounded pathwidth (and therefore treewidth). The formula $\psi(I)$ is a conjunction of the following subformulas:
\begin{alignat*}{2}
&\psi[\text{formula}] &&\,\,\,\dfn \varphi,
\\
&\psi[\text{depth}] &&\,\,\,\dfn \G \bigwedge_{i = 0}^{n - 1} \Big[(d_i \land \neg d_{i + 1})
  \rightarrow (m_{i \bmod 2} \\
& && \hphantom{\,\,\,\dfn \G \bigwedge_{i = 0}^{n - 1} \Big[} \land \neg m_{1- (i \bmod 2)} \land \F (d_{i + 1} \land \neg d_{i + 2})) \Big],
\\
&\psi[\text{fixed-}Q] &&\,\,\,\dfn \bigwedge_{i = 1}^{n} \left[\left(q_i \rightarrow \G q_i\right)
  \land \left(\neg q_i \rightarrow \G \neg q_i \right) \right],
\\
&\psi[\text{signal}] &&\,\,\,\dfn \G \bigwedge_{i = 1}^{n} \bigg[\left(d_i \land \neg d_{i + 1}\right) \rightarrow
\\
& && \qquad \bigg(\left(q_i \rightarrow \top^\uparrow_{p(i)}\right) \land \, \left(\neg q_i \rightarrow \bot^\uparrow_{p(i)}\right)\bigg)\bigg],\nonumber
\\
&\psi[\text{init}] \!\!\!&& \,\,\,\dfn d_0 \land \neg d_1
\land \G \bigwedge_{p = 1}^{k}\left[\top^{0}_{p} \land \bot^0_p \right],
\\
&\psi[\text{count}] &&\,\,\,\dfn \G \bigwedge_{p = 1}^{k} \bigwedge_{j = 0}^{\size{Q_p}} \bigwedge_{m = 0}^{1}
\bigg[ \\
& \qquad \qquad \bigg( \left(\top^\uparrow_{p} \land \top^j_{p} \land m_i\right) \rightarrow \G \left(m_{1-i} \rightarrow \G \top^{j+1}_{p} \right) \bigg) \span \span
\\
& \qquad \qquad \bigg( \left(\bot^\uparrow_{p} \land \bot^j_{p} \land m_i\right) \rightarrow \G \left(m_{1-i} \rightarrow \G \bot^{j+1}_{p} \right) \bigg) \bigg], \span \span
\\
&\psi[\text{monotone}]\!\!\! && \,\,\, \dfn \G \Bigg[\bigwedge_{i = 1}^{n} \left(d_i \rightarrow d_{i - 1} \right) \\
& \qquad \qquad \qquad \land \bigwedge_{p = 1}^{k} \bigwedge_{j = 1}^{\size{Q_p} + 1} \left(\top^j_{p} \rightarrow \top^{j-1}_{p} \right) \land \left(\bot^j_{p} \rightarrow \bot^{j-1}_{p} \right) \Bigg], \span \span
\\
&\psi[\text{target}] \!\!\!&& \,\,\, \dfn \G \bigwedge_{p=1}^{k} \bigg[\neg \top^{C_p + 1}  \land \neg \bot^{n(p) - C_p + 1}_p \bigg].
\end{alignat*}

The idea of the reduction is as follows. Let $\calM$ be a model of $\psi(I)$. For each proposition $q_i$ of $\{q_1, \ldots, q_n\}$ a world $w_i$ is contained in $\calM$ which has $d_i$ labeled. If $q_i$ is in partition $p(i)$, then in $w_i$ the ``signal'' proposition $\top^\uparrow_p(i)$, resp., $\bot^\uparrow_p(i)$ indicates that the weight counter of this partition should be increased, where for each partition the ones and zeros are counted separately. $\psi[\text{count}]$ implements the counting, whereas $\psi[\text{target}]$ ensures that every partition has exactly the desired weight (neither ones nor zeros are too many).

The exact proof of correctness and of the $\kappa$-boundedness of the pathwidth is omitted. The reader is instead referred to the thesis of \citeauthor{lueck-ma} \cite{lueck-ma}. Using the equivalences $\F \alpha \equiv \neg \G \neg \alpha$, $\G \alpha \equiv \neg \F \neg \alpha$ and $\F \alpha \equiv \top \U \alpha$, the reduction is possible with any temporal operator except $\X$.
\end{proof}

The next theorems were originally shown by \citeauthor{demri_complexity_2002}. \Cref{thm:slice-hardness} directly translates this into the world of parameterised complexity.

\begin{theorem}[\cite{demri_complexity_2002}]\label{thm:ltl-hardness-td-paraNP}
If $T \subseteq \{ \X \}$ or $T \subseteq \{ \F, \G \}$, then the problem \emph{$(\LTLSAT(T), \td)$} is \emph{para-$\NP$}-complete.
\end{theorem}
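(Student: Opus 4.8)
The plan is to decompose para-$\NP$-completeness into the membership and hardness directions and, in both, to lean on \Cref{thm:slice-hardness} together with the non-parameterised classification of these operator fragments due to \citeauthor{demri_complexity_2002}. The single fact I would extract from that work is that for $T \subseteq \{\X\}$ and for $T \subseteq \{\F,\G\}$ the ordinary (unparameterised) problem $\LTLSAT(T)$ is already $\NP$-complete. Given this, the parameterisation by $\td$ essentially comes for free: a classical $\NP$ machine running in time $\size{\varphi}^{\bigO{1}}$ is trivially bounded by $f(\td(\varphi)) \cdot \size{\varphi}^{\bigO{1}}$ with the constant function $f \equiv 1$, so $\NP \subseteq$ para-$\NP$ regardless of the chosen parameter. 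Hence $(\LTLSAT(T),\td)$ lies in para-$\NP$ immediately.

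For the membership lower bound I would recall why each fragment sits in $\NP$. In the case $T \subseteq \{\X\}$ a formula of temporal depth at most $c$ only refers to the first $c+1$ positions of a run, so one can nondeterministically guess a valuation of $\PROP$ at each of these positions and verify the formula in polynomial time; since the transition relation is total, any such finite prefix extends to a genuine path (e.g.\ a chain with a self-loop on the last world), so $\LTLSAT(\X)$-satisfiability coincides with satisfiability of the propositional expansion. In the case $T \subseteq \{\F,\G\}$ the argument instead relies on a small-model (lasso) property for the $\F,\G$-fragment, allowing one to guess a polynomial-size model and check it. I would cite these bounds rather than reprove them.

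For para-$\NP$-hardness I would invoke \Cref{thm:slice-hardness} with $\calC = \NP$, which reduces the task to exhibiting a finite union of slices that is $\NP$-hard. Here the slice $\ell = 0$ already suffices: a formula with $\td(\varphi) = 0$ contains no temporal operator, so $(\calA,\pi) \models \varphi$ depends only on $V(\pi[0])$, and one may freely realise any Boolean assignment at $\pi[0]$. Thus LTL-satisfiability of such $\varphi$ is exactly propositional satisfiability, and the depth-$0$ slice $(\LTLSAT(T),\td)_0$ equals SAT, which is $\NP$-hard. This single slice is therefore $\NP$-hard, and \Cref{thm:slice-hardness} upgrades this to para-$\NP$-hardness; combined with the membership above, para-$\NP$-completeness follows. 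I expect the only genuinely non-trivial ingredient, and hence the main obstacle, to be the $\NP$ upper bound for the $\{\F,\G\}$-fragment, i.e.\ the small-model property, which is exactly the content imported from \citeauthor{demri_complexity_2002}; the hardness direction and the $T \subseteq \{\X\}$ membership are routine by comparison.
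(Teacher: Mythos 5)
Your proposal is correct and matches the paper's (implicit) argument: the paper gives no separate proof but attributes the classical $\NP$-completeness of these fragments to \citeauthor{demri_complexity_2002} and remarks that \Cref{thm:slice-hardness} directly lifts this to para-$\NP$-completeness, which is exactly your decomposition (classical $\NP$ membership gives para-$\NP$ membership for any parameterisation, and an $\NP$-hard finite union of slices gives para-$\NP$-hardness). Your instantiation of the hardness via the depth-$0$ slice, which is plainly propositional SAT, is a perfectly valid---if anything, slightly more elementary---choice of slice than the bounded-depth hardness results one could import from \citeauthor{demri_complexity_2002}.
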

\begin{theorem}[\cite{demri_complexity_2002}]\label{thm:ltl-hardness-td-paraPSPACE}
If $\{ \F, \X \} \subseteq T$, $\{ \G, \X \} \subseteq T$ or $\U \in T$, then the problem \emph{$(\LTLSAT(T), \td)$} is \emph{para-$\PSPACE$}-complete.
\end{theorem}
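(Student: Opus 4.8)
The plan is to establish the two directions of completeness separately, treating para-$\PSPACE$ membership as routine and deriving hardness from \Cref{thm:slice-hardness} together with the constant-depth $\PSPACE$-hardness of the fragments in question. The only genuinely hard content lives in the cited bounded-depth hardness construction, so the proof is essentially a translation of \citeauthor{demri_complexity_2002}'s classical results into the parameterised setting.

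For membership I would simply note that, for every operator set $T$, the problem $\LTLSAT(T)$ is a restriction of full LTL satisfiability, which is classically in $\PSPACE$. Since $\PSPACE$ is one of the robust classes for which para-$\calC$ is obtained by appending a factor $f(\kappa)$ to the space bound, choosing the trivial function $f \equiv 1$ already places $(\LTLSAT(T), \td)$ in para-$\PSPACE$; the parameter need not be consulted at all.

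For hardness the key step is to apply \Cref{thm:slice-hardness} with $\calC = \PSPACE$: it then suffices to exhibit a finite union of slices of $(\LTLSAT(T), \td)$ that is $\PSPACE$-hard under ordinary polynomial-time reductions. By definition the $\ell$-th slice consists of exactly the satisfiable formulas of temporal depth $\ell$, so the union of slices $0, \ldots, c$ is precisely the yes-instance set of $\LTLSATlim{c}(T)$. The result of \citeauthor{demri_complexity_2002} supplies, for each of the three configurations $\{\F,\X\} \subseteq T$, $\{\G,\X\} \subseteq T$ and $\U \in T$, a fixed constant $c$ for which satisfiability restricted to temporal depth at most $c$ is already $\PSPACE$-hard. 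This makes the corresponding finite union of slices $\PSPACE$-hard, and \Cref{thm:slice-hardness} upgrades this directly to para-$\PSPACE$-hardness, which together with membership yields completeness.

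The main obstacle is entirely contained in that bounded-depth hardness argument, and this is where I expect the real difficulty. To simulate an arbitrarily long $\PSPACE$ computation (a polynomial-space Turing machine run, or a QBF evaluation) while keeping temporal nesting constant, one cannot nest $\X$ or $\U$ proportionally to the computation length. The trick instead is to flatten the temporal structure: fresh propositions encode the contents and head position of each configuration, and all step-to-step consistency constraints are enforced by a single outermost $\G$ (or a single $\U$), so that the whole unbounded sequence of configurations is pinned down by a formula of constant temporal depth. Checking that such an encoding is correct and genuinely depth-bounded for each admissible operator set is the delicate part; here the duality $\G\alpha \equiv \neg\F\neg\alpha$ transfers the $\{\G,\X\}$ case from the $\{\F,\X\}$ case, and the equivalence $\F\alpha \equiv \top \U \alpha$ lets the $\U$-fragment absorb the remaining cases.
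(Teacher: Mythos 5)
Your proposal is correct and takes essentially the same route as the paper: the paper likewise treats membership as immediate from the classical $\PSPACE$ upper bound and obtains hardness by observing that \citeauthor{demri_complexity_2002}'s constant-temporal-depth $\PSPACE$-hardness results make a finite union of slices $\PSPACE$-hard, which Proposition~\ref{thm:slice-hardness} lifts directly to para-$\PSPACE$-hardness. Your closing sketch of how the bounded-depth encoding works is extra detail that the paper delegates entirely to the citation, but it does not alter the argument.
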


\begin{theorem}\label{thm:ltl-x-in-fpt}
For $T \subseteq \{ \X \}$, the problems \emph{$(\LTLSAT(T), \tw_\varphi)$} and \emph{$(\LTLSAT(T), \pw_\varphi)$} are in \emph{$\FPT$}.
\end{theorem}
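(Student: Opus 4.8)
The plan is to observe that on the fragment $\LTLform(\{\X\})$ satisfiability collapses to propositional satisfiability, and then to invoke the (known) fixed-parameter tractability of SAT parameterised by the treewidth of the syntax graph. The only temporal operator available, $\X$, shifts the evaluation point by exactly one step, so the truth of a formula $\varphi$ on a run $\pi$ depends only on the valuations of the first $\td(\varphi)+1$ worlds. First I would make this explicit through an unfolding translation $\mathrm{tr}_k$, indexed by an offset $k \ge 0$, defined on subformulas by $\mathrm{tr}_k(p) \dfn p_k$, $\mathrm{tr}_k(\neg\psi) \dfn \neg\,\mathrm{tr}_k(\psi)$, $\mathrm{tr}_k(\psi_1 \wedge \psi_2) \dfn \mathrm{tr}_k(\psi_1) \wedge \mathrm{tr}_k(\psi_2)$ and $\mathrm{tr}_k(\X\psi) \dfn \mathrm{tr}_{k+1}(\psi)$, where each $p_k$ is a fresh proposition; then I set $\widehat\varphi \dfn \mathrm{tr}_0(\varphi)$, a purely propositional formula (the case $T = \emptyset$ being the even simpler instance $k = 0$).

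Next I would prove that $\varphi$ is satisfiable if and only if $\widehat\varphi$ is propositionally satisfiable. From a model $(\calA,\pi) \models \varphi$ one reads off the assignment $p_k \mapsto [\,p \in V(\pi[k])\,]$ and verifies $\beta \models \mathrm{tr}_k(\psi) \iff (\calA,\pi^k) \models \psi$ by induction on $\psi$. Conversely, a satisfying assignment $\beta$ of $\widehat\varphi$ yields a Kripke structure on pairwise distinct worlds $w_0,\dots,w_D$, $D \dfn \td(\varphi)$, with $p \in V(w_k)$ iff $\beta(p_k) = 1$, edges $w_k R w_{k+1}$, and a self-loop $w_D R w_D$ for totality; the run $w_0 w_1 \cdots w_D w_D \cdots$ then satisfies $\varphi$. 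The essential point is that the model may be chosen freely, so the valuations at distinct time points are independent variables---which is precisely what lets us treat the $p_k$ as unrelated propositions.

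Then I would bound the new parameter. The graph $\calS(\widehat\varphi)$ is obtained from $\calS(\varphi)$ by two operations: suppressing every $\X$-node (which is unary and therefore of degree at most two) and splitting each proposition node $p$ into its offset-copies $p_0, p_1, \dots$, sending each edge incident to $p$ to the copy matching its offset. Suppressing a degree-two vertex is a minor operation and cannot raise the treewidth; and splitting a vertex into an independent set while partitioning its incident edges cannot raise it either, so $\tw(\calS(\widehat\varphi)) \le \tw(\calS(\varphi)) = \tw(\varphi)$. I expect this splitting lemma to be the main obstacle. The naive approach---take a width-$w$ tree decomposition of $\calS(\varphi)$ and place all copies wherever $p$ sat---inflates the width by the number of copies, which is bounded only by $\td(\varphi)$ and not by the parameter. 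Instead I would argue through elimination orderings: in an optimal ordering of $\calS(\varphi)$, eliminate $p_0, p_1, \dots$ consecutively in the position previously occupied by $p$. Because the edge partition sends each former neighbour of $p$ to exactly one copy, distinct copies never become fill-adjacent, every copy is eliminated with back-degree at most that of $p$, and no later back-degree increases; hence the width stays $\le w$.

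Finally I would invoke that satisfiability of a propositional formula is fixed-parameter tractable in the treewidth of its syntax (equivalently, incidence) graph, solved by the standard bottom-up dynamic programme over a tree decomposition in time $2^{\bigO{w}} \cdot \size{\widehat\varphi}^{\bigO{1}}$; a suitable decomposition is obtainable in fpt time, e.g.\ constructively from the elimination ordering above. Composing the polynomial-time unfolding with this algorithm and the width bound $\tw(\calS(\widehat\varphi)) \le \tw_\varphi$ gives an fpt-algorithm for $(\LTLSAT(\{\X\}), \tw_\varphi)$. The pathwidth case requires no further work: since $\tw \le \pw$, bounding $\pw_\varphi$ also bounds $\tw_\varphi$, so the very same algorithm is already an fpt-algorithm for $(\LTLSAT(\{\X\}), \pw_\varphi)$.
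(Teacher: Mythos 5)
Your translation $\mathrm{tr}_0$ and the equisatisfiability argument are correct and follow the same route as the paper, which rewrites $\varphi$ into a Boolean combination of $\X$-stacked variables and treats each $\X^{n_i} q_i$ as a fresh atom. The genuine gap is exactly where you predicted it: the splitting lemma is \emph{false}, and your elimination-ordering argument for it breaks down. The claim that ``distinct copies never become fill-adjacent'' fails because fill edges do not respect the edge partition: if $u$ is adjacent to $p_i$ and to $u'$, and $u'$ is adjacent to $p_j$, then eliminating $u$ creates the fill edge $\{u', p_i\}$, and eliminating $u'$ subsequently creates the fill edge $\{p_i, p_j\}$ --- the copies become fill-adjacent after all, and back-degrees grow. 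Worse, no repair is possible: the inequality fails not just as stated but by an unbounded margin. Consider $\varphi_k \dfn \bigwedge_{0 \le i < j \le k} \left( \X^i p \wedge \X^j p \right)$. The graph $\calS(\varphi_k)$ is a tree with all $p$-leaves identified into a single node, so $\tw(\calS(\varphi_k)) \le 2$ (take a width-$1$ decomposition of the tree and add $p$ to every bag). But $\calS(\widehat{\varphi_k})$ contains a subdivision of $K_{k+1}$ with branch vertices $p_0, \ldots, p_k$, since the $\wedge$-node of the conjunct for $\{i,j\}$ is adjacent to $p_i$ and $p_j$; hence $\tw(\calS(\widehat{\varphi_k})) \ge k$. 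So there is no bound $\tw(\calS(\widehat\varphi)) \le f(\tw(\calS(\varphi)))$ for any function $f$, and your map is not an fpt-reduction with respect to $\tw_\varphi$.

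You should know that the paper's own proof is a one-line assertion of $(\LTLSAT(\X), \tw_\varphi) \leqfpt (\SAT, \tw_\varphi)$ with the same unproved preservation step, and the family $\varphi_k$ defeats it verbatim, since the atoms $\X^i p$ with distinct $i$ become distinct merged leaves there too. In fact the obstruction is intrinsic to the statement, not to the translation: substituting $\X^j p$ for the variable $x_j$ embeds 3SAT into $\LTLSAT(\X)$ using a \emph{single} proposition (the atoms $\X^j p$ are independent, as a satisfying path may be chosen loop-free on its relevant prefix, and the unary $\X$-chains cause only polynomial blowup), and every single-proposition formula has syntactical treewidth at most two and, for this caterpillar-shaped family, constant pathwidth as well; by the slice argument of Proposition~\ref{thm:slice-hardness} this makes $(\LTLSAT(\X), \tw_\varphi)$ para-$\NP$-hard, consistent with the known one-variable $\NP$-hardness results of Demri and Schnoebelen \cite{demri_complexity_2002}. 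So fixed-parameter tractability can only hold with $\td$ added to the parameter --- note the ``$(\td +)$'' hedge in the paper's summary table --- and for $\td + \tw_\varphi$ it is precisely your dismissed ``naive approach'' that is the correct proof: contract the $\X$-chains (a minor operation, so the width-$w$ decomposition survives), then place \emph{all} copies $p_0, \ldots, p_{\td(\varphi)}$ of $p$ into every bag that contained $p$, giving width at most $(w+1)(\td(\varphi)+1) - 1$, which is bounded in the combined parameter, after which the $\FPT$ algorithm for $(\SAT, \tw_\varphi)$ of \cite{pc-ctl15} applies.
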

\begin{proof}
It holds due to the path semantics of LTL that $\X(\phi \land \psi) \equiv \X\phi \land \X \psi$,
$\X(\phi \lor \psi) \equiv \X\phi \lor \X \psi$ and $\X \neg \phi \equiv \neg \X \phi$
for $\phi, \psi \in \LTLform$.
Hence every LTL formula with only $\X$-operators can efficiently be
converted to an equivalent Boolean combination $\beta$ of $\X$-preceded variables:
\[
\varphi \equiv \beta(\X^{n_1} q_1, \ldots, \X^{n_m} q_m)\text{,} \quad \X^i \dfn \underbrace{\X \ldots \X}_{i \text{ times}}\text{,}
\]
where the $q_i$ are propositional variables. Inconsistent literals can only occur inside the same world and therefore at the same
nesting depth of $\X$. Hence the above formula $\varphi$ is satisfiable if and only if
it is satisfiable as a purely propositional formula where the expression $\X^{n_i} q_i$
is interpreted as an atomic formula (\ie, a variable).

Formally we have
$
(\LTLSAT(\X),\tw_\varphi) \leqfpt (\SAT,\tw_\varphi)\text{.}
$
Note that $(\SAT, \tw_\varphi) \in \FPT$ as a special case of CTL was shown by \citeauthor{pc-ctl15} \cite{pc-ctl15}.
As pathwidth is an upper bound for treewidth, $(\LTLSAT(\X), \pw_\varphi)$ is in $\FPT$ as well.
\end{proof}

Unsurprisingly, all hard LTL fragments correspond to hard CTL fragments if we just supplement the operators with path quantifiers \cite{pc-ctl15}.
For the fragment CTL($\AX$) (or equivalently modal logic on serial frames) being in $\FPT$ \cite{prav13} we however need the temporal depth as an additional parameter.
As satisfiable LTL formulas are already satisfied on paths and less expressive than modal logic, this extra parameter is not required for the $\LTLform(\X)$ fragment.

In the next part we turn from satisfiability to model checking.

\subsection{Model checking}

It turns out that LTL model checking is surprisingly hard for almost all studied parameterisations.
This is already the case in classical complexity theory. While model checking for CTL is $\P$-complete, it is $\PSPACE$-complete for LTL and \CTLstarText{}, and is in fact $\NP$-hard for every fragment with a non-empty operator set \cite{bamuscscscvo11}. This inherent hardness is due to the different semantics of CTL and LTL:
In CTL, every subformula of a formula is what is called \emph{state formula}. A polynomial time algorithm is obtained by recursively determining fulfilled subformulas in every world of the model.
LTL is built from \emph{path formulas} which have no truth value \wrt to states but only to paths from the root of the model.
This forbids P algorithms in the CTL style; hardness reductions to LTL model checking usually construct Kripke structures with few worlds and (exponentially) many paths between them. In fact, LTL model checking on non-branching structures is in P \cite{markey2003model}. The reductions in this section follow this scheme and in general produce branching Kripke structures with certain properties.

Long before the introduction of parameterised complexity theory, statements as early as from \citeauthor{lichPnue85} already distinguished between \emph{program complexity}, the runtime dependent on the length of the formula $\varphi$, and \emph{structure complexity}, the runtime dependent on the length of the structure $\calA$ to be checked \cite{lichPnue85}. They stated that the runtime factor $2^\size{\varphi}$ in their algorithm does not prohibit efficient model checking as the size of the structure is clearly dominant in practice.


In the context of parameterised complexity, this automatically yields nice fixed-parameter tractable problems:

\begin{corollary}
Let $\kappa(\varphi, \calA, w) \dfn \size{\varphi}$. Then \emph{$(\LTLMC, \kappa) \in \FPT$}.
\end{corollary}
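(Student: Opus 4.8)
The plan is to exhibit a deterministic algorithm whose running time is of the form $f(\size{\varphi}) \cdot \size{(\calA,w)}^{\bigO{1}}$, which by definition places $(\LTLMC, \kappa)$ in $\FPT$ under the parameterisation $\kappa = \size{\varphi}$. The classical fact I would invoke is the automata-theoretic approach of Lichtenstein and Pnueli (already cited in the excerpt): one constructs a Büchi automaton $\mathcal{B}_{\lnot\varphi}$ recognising exactly the runs violating $\varphi$, whose number of states is bounded by $2^{\bigO{\size{\varphi}}}$, and then checks emptiness of the product with $\calA$. Since the model-checking problem asks whether \emph{all} paths $\pi \in \Pi(w)$ satisfy $\varphi$, this is equivalent to asking that no run of $\calA$ starting at $w$ is accepted by $\mathcal{B}_{\lnot\varphi}$, i.e.\ that the product automaton has empty language.

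First I would recall that the standard LTL-to-Büchi translation yields an automaton of size $2^{\bigO{\size{\varphi}}}$, so its construction depends only on $\varphi$ and costs time $g(\size{\varphi})$ for some computable $g$. Second, I would form the synchronous product of $\mathcal{B}_{\lnot\varphi}$ with the Kripke structure $\calA$ (restricted to runs from $w$); this product has at most $\size{W} \cdot 2^{\bigO{\size{\varphi}}}$ states and is computable in time $f(\size{\varphi}) \cdot \size{\calA}^{\bigO{1}}$. Third, I would test the product for nonemptiness via the usual nested depth-first search or an SCC-based reachability check for an accepting cycle, which runs in time linear in the size of the product. If the product is empty then every run from $w$ satisfies $\varphi$ and we accept; otherwise we reject. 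Chaining these bounds gives a total running time of the desired form $f(\size{\varphi}) \cdot \size{(\calA,w)}^{\bigO{1}}$.

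Since there is no genuine obstacle here---the result is an immediate packaging of the Lichtenstein--Pnueli bound into the language of parameterised complexity---the only point requiring any care is cleanly separating the two contributions to the running time: the factor $2^{\bigO{\size{\varphi}}}$ depends solely on the parameter $\size{\varphi}$ and hence may be absorbed into the computable function $f$, while the remaining polynomial factor in $\size{\calA}$ is exactly the structure complexity that Lichtenstein and Pnueli argued is dominant in practice. Emphasising this separation is really the whole content of the corollary, and it explains why LTL model checking, though $\PSPACE$-complete in general, becomes fixed-parameter tractable once the formula length is taken as the parameter.
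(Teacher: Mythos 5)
Your proposal is correct and matches the paper's own (implicit) argument: the paper derives the corollary directly from the Lichtenstein--Pnueli bound of $2^{\size{\varphi}}$ times a polynomial in $\size{\calA}$, which is exactly the automata-theoretic runtime separation you spell out. Your write-up merely makes explicit the standard B\"uchi-automaton construction and emptiness check behind that bound, which is fine and changes nothing of substance.
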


The next logical step is to study the influence of more parameterisations on the model checking complexity:
Define $\tw_\calA$ as the treewidth of the input structure, \ie, $\tw_\calA(\varphi, \calA, w) \dfn \tw(\calA)$.
Define $\tw_\varphi$ as the structural treewidth of the input formula, \ie, $\tw_\varphi(\varphi, \calA, w) \dfn \tw(\calS_{\varphi})$. Similarly define $\pw_\calA$ and $\pw_\varphi$.


\begin{proposition}[\cite{SC85, sch02, demri_complexity_2002}]\label{thm:ltlmc-in-pspace}
\emph{$\LTLMC(\X, \F, \G, \U) \in \PSPACE$}.
\end{proposition}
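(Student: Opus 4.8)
The goal is to show that $\LTLMC(\X,\F,\G,\U)$ lies in $\PSPACE$. Since $\PSPACE = \NPSPACE$ by Savitch's theorem, and since $\PSPACE$ is closed under complement, it suffices to give a nondeterministic polynomial-space algorithm that decides the \emph{complement} of the universal model checking problem: namely, guess a single path $\pi$ witnessing $(\calA,\pi) \not\models \varphi$, equivalently $(\calA,\pi) \models \neg\varphi$.

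My plan is to use the classical automata-theoretic approach, but carried out on the fly to stay within polynomial space. First I would observe that a violating path, if one exists, can be taken to be \emph{ultimately periodic} of the form $u v^\omega$ where $|u|,|v| \le |W| \cdot 2^{O(|\varphi|)}$; this is the standard small-model property for LTL arising from the B\"uchi automaton $\mathcal{A}_{\neg\varphi}$, whose state set has size $2^{O(|\varphi|)}$. Rather than constructing this automaton explicitly (which would need exponential space), the key step is to represent a single state of $\mathcal{A}_{\neg\varphi}$ compactly, for instance as a set of subformulas of $\neg\varphi$ together with the ``obligations'' for the until- and eventually-operators; such a state has description length polynomial in $|\varphi|$ and can be stored and updated in polynomial space.

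\begin{proof}[Proof sketch]
Because $\PSPACE = \coNPSPACE = \NPSPACE$, it suffices to decide the complementary problem in nondeterministic polynomial space: does there exist a path $\pi \in \Pi(w)$ with $(\calA, \pi) \not\models \varphi$, equivalently $(\calA, \pi) \models \neg\varphi$? The plan is to guess such a violating path step by step while simultaneously tracking, on the fly, the state of a B\"uchi automaton $\mathcal{A}_{\neg\varphi}$ accepting exactly the infinite words over $2^{\PROP}$ that satisfy $\neg\varphi$.

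The central observation is that a single state of the generalised B\"uchi (or tableau) automaton for $\neg\varphi$ can be encoded as a subset of the closure of $\neg\varphi$ (the subformulas and their negations), which has only $O(|\varphi|)$ elements, so each state has a polynomial-size description even though the automaton itself has exponentially many states. The algorithm therefore maintains only a \emph{current} world $p_i \in W$ and a \emph{current} automaton state $S_i$. At each step it nondeterministically chooses a successor $p_{i+1}$ with $p_i R p_{i+1}$ (legal by totality of $R$) and a successor automaton state $S_{i+1}$, verifying locally that the transition $(S_i, V(p_i), S_{i+1})$ is consistent with the one-step unfolding rules for the temporal operators $\X, \F, \G, \U$; this consistency check inspects only the polynomially many subformulas in $S_i$ and $S_{i+1}$.

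Since the product automaton $\calA \times \mathcal{A}_{\neg\varphi}$ has $|W| \cdot 2^{O(|\varphi|)}$ configurations, any nonempty accepting run admits one that reaches an accepting (fair) configuration and then returns to it, of length at most doubly the configuration count. The algorithm guesses such a run: it nondeterministically marks a configuration, records it, continues the guessed path, and checks both that the marked configuration recurs and that all B\"uchi acceptance (fairness) conditions are met in between. Maintaining a step counter up to $|W| \cdot 2^{O(|\varphi|)}$ requires only $O(|\varphi| + \log|W|)$ bits, and storing the marked configuration costs polynomial space. Hence the whole procedure runs in $\NPSPACE = \PSPACE$, which proves $\LTLMC(\X,\F,\G,\U) \in \PSPACE$.
\end{proof}

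The step I expect to be the main obstacle is getting the fairness bookkeeping exactly right: a generalised B\"uchi condition has several accepting sets (one per until-/eventually-subformula), and the on-the-fly guess-and-verify loop must certify that \emph{each} such obligation is discharged infinitely often along the periodic part, not merely once. Handling this within nondeterministic polynomial space — by cycling through the obligation index and re-verifying the recurrence for each — is routine but is where a careless argument would break, and it is the crux that distinguishes this from a plain reachability check.
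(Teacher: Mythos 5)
The paper itself gives no proof of this proposition: it is quoted directly from the literature (\cite{SC85, sch02, demri_complexity_2002}), so there is no in-text argument to compare against line by line. Your sketch is, in substance, exactly the classical argument underlying those references: complement the universal problem, guess a violating path on the fly while tracking a tableau/B\"uchi state for $\neg\varphi$ encoded as a polynomial-size subset of the closure, bound the witness to a lasso of length $\size{W}\cdot 2^{O(\size{\varphi})}$, and conclude via nondeterministic polynomial space being equal to $\PSPACE$ (Savitch, plus closure of deterministic space under complement). This is correct in outline, and you correctly identify the one genuinely delicate point: under a generalised B\"uchi condition there is no single ``accepting configuration'' to mark and revisit, so the phrase in your second paragraph (``reaches an accepting (fair) configuration and then returns to it'') is, taken literally, wrong; it must be read as ``returns to the marked configuration having visited, in between, at least one state from \emph{each} acceptance set'' --- one set per $\U$- and $\F$-subformula, certifying that every pending eventuality is fulfilled on the loop. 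Your final paragraph states the repair (round-robin through the obligation indices, or equivalently the standard degeneralisation with a counter costing only $O(\log\size{\varphi})$ extra bits of state), so the argument closes. Two small points you could make explicit for completeness: the consistency conditions on a tableau state (maximal propositionally consistent subsets of the closure, with the usual one-step expansion laws for $\X$, $\F$, $\G$, $\U$) are what make the local transition check sound, and totality of $R$ guarantees the guessed path can always be extended. Note also that \citeauthor{SC85}'s original presentation phrases the same content via an ultimately periodic path property and tableaux rather than an explicit automaton product, but the bookkeeping is identical to yours.
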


\begin{proposition}[\cite{SC85, sch02, demri_complexity_2002}]\label{thm:ltlmc-in-conp}
For $T \subseteq \{ \X \}$ or $T \subseteq \{ \F, \G \}$ it holds \emph{$\LTLMC(T) \in \coNP$}.
\end{proposition}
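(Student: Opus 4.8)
The plan is to show that the \emph{complement} of the problem lies in $\NP$; since $\coNP$ is exactly the class of problems whose complement is in $\NP$, membership of $\LTLMC(T)$ in $\coNP$ follows. The complement asks, given $\calA$, $w$ and $\varphi \in \LTLform(T)$, whether there exists a path $\pi \in \Pi(w)$ with $(\calA,\pi) \not\models \varphi$. The uniform idea is to guess a succinctly represented counterexample path and verify the violation in polynomial time; the whole difficulty is to argue that a counterexample of polynomial size always exists.

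For $T \subseteq \{\X\}$ this is immediate. The truth of $\varphi$ on a path depends only on its first $\td(\varphi) + 1 \le \size{\varphi} + 1$ worlds, since each $\X$ advances the evaluation position by exactly one and no operator inspects the infinite tail. Hence I would nondeterministically guess a world sequence $w_0, \dots, w_d$ with $d = \td(\varphi)$ and $w_0 = w$ satisfying $w_i R w_{i+1}$, check in polynomial time that it is a legal path prefix, and evaluate $\varphi$ bottom-up on it to test whether it is falsified. As $R$ is total, every such prefix extends to an infinite path, so the guess is a genuine witness, and the complement lies in $\NP$.

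For $T \subseteq \{\F, \G\}$ the counterexample is taken to be an ultimately periodic (lasso) path $\pi = u\,v^{\omega}$, and the crux is to bound $\size{u}$ and $\size{v}$ polynomially in $\size{\calA}$ and $\size{\varphi}$. The key structural fact I would establish is a \emph{stabilisation lemma}: along any path the truth value of $\F\alpha$ is nonincreasing and that of $\G\alpha$ is nondecreasing in the evaluation position, so by induction every temporal subformula is eventually constant and there are at most $\size{\varphi}$ positions at which the valuation of some subformula switches. This partitions the path into at most $\size{\varphi}+1$ intervals of constant valuation, the last of which is infinite and whose valuation depends only on the set of worlds visited infinitely often. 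Because $\F$ and $\G$ can neither count steps nor distinguish stuttering, each finite interval can be contracted to a simple sub-path through $\calA$ and the infinite tail can be replaced by a simple cycle through the recurrent worlds, yielding an equivalent lasso with $\size{u} \le \size{\varphi}\cdot\size{\calA}$ and $\size{v} \le \size{\calA}$. I would then guess such a lasso, check that it is a legal path in $\calA$, and evaluate $\varphi$ on it in polynomial time by exploiting that the induced valuation is ultimately periodic with period $\size{v}$, so that every $\F$ and $\G$ reduces to a reachability test over $u$ and over a single copy of $v$.

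The main obstacle is this folding argument for the $\{\F,\G\}$ case: one must prove that contracting the finite intervals and the recurrent part preserves the truth value of $\varphi$ at position $0$. The delicate point is that the truth of a subformula at a retained position is itself a property of the whole suffix, so contraction must simultaneously keep exactly the worlds needed to witness each \emph{true} $\F$-subformula while avoiding any world that would falsify a \emph{true} $\G$-subformula, all the while keeping the retained segments connected by edges of $\calA$. Once the invariance of the evaluation under folding is secured, the polynomial size bound and hence membership in $\coNP$ follow routinely; this is precisely the short-counterexample property underlying the classical $\coNP$ upper bound for this fragment~\cite{SC85}.
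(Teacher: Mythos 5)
The paper offers no proof of this proposition at all---it is imported verbatim from \cite{SC85, sch02, demri_complexity_2002}---and your argument is exactly the classical one from those sources: for $T \subseteq \{\X\}$, evaluation on a guessed prefix of length $\td(\varphi)+1$ (sound because $R$ is total), and for $T \subseteq \{\F,\G\}$, the short ultimately-periodic counterexample property derived from the monotonicity of $\F$- and $\G$-subformulas along a path (each switches truth value at most once, giving at most $|\varphi|$ switch points) followed by contraction of the stutter-equivalent segments and the recurrent tail. Your reconstruction is correct in approach and identifies the genuinely delicate step (truth-preservation under folding) in the right place; the only quibble is quantitative and harmless---a closed walk through all recurrent worlds of the tail's strongly connected set may need up to $|\calA|^2$ rather than $|\calA|$ steps, which is still polynomial and does not affect the $\coNP$ bound.
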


\begin{define}[Maximum branching degree]
Let $\calA$ be a Kripke structure. Then write $\Delta(\calA)$ for the maximum branching degree in $\calA$, \ie, the smallest number $\Delta$ \st every world in $\calA$ has at most $\Delta$ successors.
\end{define}

\begin{theorem}
\label{thm:ltlmc-f-conp}\emph{$(\LTLMC(\F), \td + \Delta)$} is complete for \emph{para-$\coNP$}.
\end{theorem}
\begin{proof}
We follow \citeauthor{SC85} who showed that $\LTLMClim{1}(\F)$ (\ie, only $\F$-operators without nesting) is $\coNP$-hard.
This is done by a reduction from the complement of the $\NP$-complete 3SAT problem: Given a propositional formula $\varphi$ in 3CNF, is it satisfiable?
 
For this we construct a formula $\psi \in \LTLlim{1}(\F)$ and a structure $\calS$ with constant branching such that $(\calS,w_0) \models \psi$ if and only if $\varphi$ is unsatisfiable.
First assume $\varphi = \bigwedge_{i = 1}^{m} \left(L_{i,1} \lor L_{i,2} \lor L_{i,3}\right)$ where $L_{i,j}$ is a literal, \ie, a propositional variable or its negation. Then simply define $\psi \dfn \bigvee_{i = 1}^{m} \left(\F \neg L_{i,1} \land \F \neg L_{i,2} \land \F \neg L_{i,3}\right)$, so $\psi$ is basically the negation of $\varphi$ supplemented with $\F$ operators in front of the literals.

Assume that $\varphi$ contains variables $\{x_1, \ldots, x_n\}$.
For a correct reduction the structure $\calS$ is now required to allow either $\F x_i$ or $\F \neg x_i$ to hold for $1 \leq i \leq n$, but not both. Also, for every subset $X \subseteq \{ x_1, \ldots, x_n \}$ of variables (which can
be interpreted as the assignment that sets exactly the variables in $X$ to true) there should be a path through $\calS$ and vice versa.
The structure depicted in \Cref{fig:ltl-mc-conp} has these property and therefore models propositional assignments as runs from its initial world $w_0$. This means that all runs in $\calS$ fulfill the path formula $\psi$ if and only if $\neg \varphi$ is satisfied by all Boolean assignments. Hence $\varphi \notin \text{3SAT} \Leftrightarrow (\psi, \calS, w_0) \in \LTLMClim{1}(\F)$. $\psi$ and $\calS$ are both constructible in linear time.
The para-$\coNP$-completeness follows from \Cref{thm:slice-hardness} and \Cref{thm:ltlmc-in-conp}.
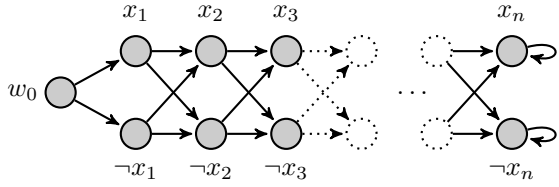
\begin{figure}\centering
\begin{tikzpicture}[->,>=stealth',shorten >=1pt,auto,node distance=0.55cm,%
        thick, %
        world/.style={circle,fill=black!20,draw,minimum size=0.4cm,inner sep=0pt}, %
        world2/.style={circle,dotted,draw,minimum size=0.4cm,inner sep=0pt}, %
        dummy/.style={inner sep=0pt,node distance=1cm}]

\node[world] (s0) at (0,0) {};
\node[dummy] (s1) [right of=s0] {};
\node[dummy] (s2) [right of=s1] {};
\node[dummy] (s3) [right of=s2] {};
\node[dummy] (s4) [right of=s3] {};
\node[dummy] (sll) [right of=s4] {};
\node[dummy] (sl) [right of=sll] {};
\node[dummy] (dots) [node distance=0.5cm, right of=s4] {$\quad \ldots$};

\node[world] (s1u) [above of=s1] {};
\node[world] (s2u) [above of=s2] {};
\node[world] (s3u) [above of=s3] {};
\node[world2] (s4u) [above of=s4] {};
\node[world2] (sllu) [above of=sll] {};
\node[world] (slu) [above of=sl] {};

\node[world] (s1d) [below of=s1] {};
\node[world] (s2d) [below of=s2] {};
\node[world] (s3d) [below of=s3] {};
\node[world2] (s4d) [below of=s4] {};
\node[world2] (slld) [below of=sll] {};
\node[world] (sld) [below of=sl] {};

\node[dummy] (l0) [node distance=0.5cm, left of=s0] {$w_0$};
\node[dummy] (x1) [node distance=0.5cm, above of=s1u] {$x_1$};
\node[dummy] (x2) [node distance=0.5cm, above of=s2u] {$x_2$};
\node[dummy] (x3) [node distance=0.5cm, above of=s3u] {$x_3$};
\node[dummy] (x1n) [node distance=0.5cm, below of=s1d] {$\neg x_1$};
\node[dummy] (x2n) [node distance=0.5cm, below of=s2d] {$\neg x_2$};
\node[dummy] (x3n) [node distance=0.5cm, below of=s3d] {$\neg x_3$};
\node[dummy] (xl) [node distance=0.5cm, above of=slu] {$x_n$};
\node[dummy] (xln) [node distance=0.5cm, below of=sld] {$\neg x_{n}$};

\path[->]
(s0) edge node{} (s1u)
(s0) edge node{} (s1d)
(s1u) edge node{} (s2u)
(s1u) edge node{} (s2d)
(s1d) edge node{} (s2u)
(s1d) edge node{} (s2d)
(s2u) edge node{} (s3u)
(s2u) edge node{} (s3d)
(s2d) edge node{} (s3u)
(s2d) edge node{} (s3d)
(sllu) edge node{} (slu)
(sllu) edge node{} (sld)
(slld) edge node{} (slu)
(slld) edge node{} (sld)
(slu) edge [loop right] node {} (slu)
(sld) edge [loop right] node {} (sld);

\path[dotted,->]
(s3u) edge node{} (s4u)
(s3u) edge node{} (s4d)
(s3d) edge node{} (s4u)
(s3d) edge node{} (s4d);
\end{tikzpicture}
\caption{Structure that models propositional assignment as runs.}
\label{fig:ltl-mc-conp}
\end{figure}
\end{proof}


This results may be surprising at first sight. As LTL is easy on non-branching structures, one could expect that bounding the branching degree leads to an easy problem as well, but in fact already a branching of degree two is sufficient to express $\coNP$-hard model properties.

\begin{theorem}\label{thm:ltl-x-cowp}
Let $T \subseteq \{ \X \}$. Then \emph{$(\LTLMC(T), \td) \in \coW{\P}$}.
\end{theorem}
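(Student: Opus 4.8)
The plan is to exploit the \emph{finite horizon} of $\X$-only formulas and to show that the complementary ``exists a violating run'' problem lies in $\W{\P}$. The key observation is that a formula $\varphi \in \LTLform(\X)$ of temporal depth $\td(\varphi) = c$ can only inspect the first $c+1$ worlds of a path: by the equivalences of \Cref{thm:ltl-x-in-fpt}, $\varphi$ is equivalent to a Boolean combination of atoms $\X^{n_i} q_i$ with every $n_i \le c$, and $(\calA,\pi) \models \X^{n_i} q_i$ depends solely on whether $q_i \in V(\pi[n_i])$. Hence the truth value of $\varphi$ along $\pi$ is determined entirely by the labels $V(\pi[0]), \ldots, V(\pi[c])$, that is, by the length-$c$ prefix of $\pi$.

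First I would pass to the complement. Since $(\calA,w,\varphi)$ is a no-instance of $\LTLMC(T)$ exactly when there exists a path $\pi \in \Pi(w)$ with $(\calA,\pi) \not\models \varphi$, by the definition of the complement class it suffices to place this existential problem in $\W{\P}$. By the horizon observation, such a $\pi$ exists if and only if there is a finite walk $p_0 p_1 \cdots p_c$ with $p_0 = w$ and $p_i R p_{i+1}$ for $0 \le i < c$ on which $\varphi$ evaluates to false; totality of $R$ guarantees that any such walk extends to a genuine infinite path, so no obligation is lost by truncating at depth $c$.

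Next I would describe the witnessing NTM for the complement. On input $(\calA,w,\varphi)$ it non-deterministically guesses the walk $p_1, \ldots, p_c$ by choosing, in each of the $c$ rounds, a successor of the current world; each choice ranges over at most $\size{W} \le \size{x}$ worlds and hence costs at most $\log\size{x} + \bigO{1}$ non-deterministic bits. The machine then deterministically verifies $p_i R p_{i+1}$ for all $i$ and evaluates $\varphi$ on $V(p_0), \ldots, V(p_c)$ by the LTL semantics, accepting iff $\varphi$ is violated. Verification and evaluation run in polynomial time, while the total number of non-deterministic steps is $c \cdot (\log\size{x} + \bigO{1}) = \bigO{\td \cdot \log\size{x}} = \bigO{\kappa(x)\cdot\log\size{x}}$.

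This is exactly the budget required by \Cref{def:wp}, so the complement of $(\LTLMC(T),\td)$ lies in $\W{\P}$ and therefore $(\LTLMC(T),\td) \in \coW{\P}$. The only point requiring care — and the crux of the argument — is the bookkeeping on the non-deterministic budget: one must confirm that the number of guessed worlds is bounded by the parameter $\td$ rather than by the (unbounded) length of $\pi$, which is precisely what the finite-horizon property of the $\X$-fragment supplies.
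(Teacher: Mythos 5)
Your proposal is correct and follows essentially the same route as the paper's proof: pass to the complement, observe that a violating path is witnessed by its length-$\td(\varphi)$ prefix, guess that prefix with $\bigO{\td(\varphi)\cdot\log\size{\calA}}$ non-deterministic bits, verify deterministically in polynomial time, and conclude membership in $\coW{\P}$ via \Cref{def:wp}. Your added justifications --- the finite-horizon property via the $\X$-distribution equivalences and the appeal to totality of $R$ to extend the guessed walk to an infinite path --- merely make explicit what the paper leaves implicit.
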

\begin{proof}
Let a formula $\varphi \in \LTLform(\X)$ and a structure $\calA$ with root $w$ be given. Now it holds that $(\calA,w) \not\models \varphi$ if and only if there is a path $\pi \in \Pi(w)$ \st $(\calA, \pi) \not\models \varphi$. But this depends only on a finite prefix of $\pi$ that has length $\td(\varphi)$.
So to determine if the given formula is not satisfied by the structure, simply guess a finite path of length $\td(\varphi)$ through $\calA$ and verify the formula. This requires at most $\bigO{\td(\varphi) \cdot \log \size{\calA}}$ non-deterministic steps (using pointers to denote worlds) and leads to $\coW{\P}$ according to \Cref{def:wp}.
\end{proof}

\begin{theorem}\emph{$(\LTLMC(T), \td + \Delta) \in \FPT$} for $T \subseteq \{ \X \}$.\end{theorem}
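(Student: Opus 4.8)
The plan is to combine the locality observation underlying \Cref{thm:ltl-x-cowp} with the branching bound $\Delta$ to obtain an explicit fpt-algorithm that merely enumerates the finitely many relevant path prefixes and evaluates $\varphi$ on each.

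First I would record the structural fact already implicit in the proof of \Cref{thm:ltl-x-cowp}: since $\varphi \in \LTLform(\X)$ has temporal depth $d \dfn \td(\varphi)$, the truth value of $\varphi$ along any path $\pi$ depends only on the worlds $\pi[0], \pi[1], \ldots, \pi[d]$, i.e., on the length-$d$ prefix of $\pi$. Pushing the $\X$-operators inward exactly as in the proof of \Cref{thm:ltl-x-in-fpt}, $\varphi$ is equivalent to a Boolean combination of atoms $\X^{i} q$ with $i \le d$, and $\X^{i} q$ holds on $\pi$ iff $q \in V(\pi[i])$. Hence whether $(\calA,\pi) \models \varphi$ is fully determined by the labellings of the first $d+1$ worlds of $\pi$.

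Next I would bound the number of prefixes that must be inspected. Because every world of $\calA$ has at most $\Delta \dfn \Delta(\calA)$ successors, the number of distinct length-$d$ prefixes starting in $w$ is at most $\Delta^{d}$. Since $R$ is total, each such prefix extends to at least one infinite path, and conversely every path carries one of these prefixes; therefore $(\calA,\pi) \models \varphi$ for all $\pi \in \Pi(w)$ iff $\varphi$ evaluates to true on every length-$d$ prefix from $w$. The algorithm is then: run a depth-bounded search of depth $d$ from $w$, enumerating the $\le \Delta^{d}$ prefixes; for each prefix $p_0 \cdots p_d$ evaluate $\varphi$ recursively, reading $q \in V(p_i)$ as the value of the atom $\X^{i} q$, at cost $\size{\varphi}^{\bigO{1}}$; and reject iff some prefix falsifies $\varphi$. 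The overall running time is $\Delta^{d} \cdot \bigl(\size{\varphi} + \size{\calA}\bigr)^{\bigO{1}}$. Setting $k \dfn \td + \Delta$ we have $\Delta \le k$ and $d \le k$, so $\Delta^{d} \le k^{k}$ is bounded by a computable function of the parameter, yielding a runtime of the form $f(\td + \Delta) \cdot \size{(\varphi,\calA,w)}^{\bigO{1}}$ and hence membership in $\FPT$.

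The only step that needs real care — and the natural main obstacle — is arguing that restricting attention to length-$d$ prefixes faithfully captures the universal quantification over all infinite paths. This rests on the totality of $R$, so that every finite prefix is realisable as the beginning of some path in $\Pi(w)$, together with the fact that $\X$-only formulas never look beyond depth $d$. Once this equivalence is justified, the prefix count $\Delta^{d}$ and the per-prefix polynomial-time evaluation are entirely routine.
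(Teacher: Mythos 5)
Your proposal is correct and matches the paper's proof, which likewise replaces the non-deterministic guess of a length-$\td(\varphi)$ path from \Cref{thm:ltl-x-cowp} by an exhaustive depth-bounded search over successors, yielding the runtime $\size{\varphi}^{\bigO{1}} \cdot \Delta(\calA)^{\td(\varphi)}$. You merely spell out the details the paper leaves implicit (the prefix-locality of $\X$-only formulas, totality of $R$, and the bound $\Delta^{d} \leq k^{k}$), all of which are sound.
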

\begin{proof}
Instead of guessing a path of length $\td(\varphi)$ for given $\varphi$ as in \Cref{thm:ltl-x-cowp} do an exhaustive search by iterating the successors of each world. This leads to a runtime of $\size{\varphi}^\bigO{1} \cdot \Delta(\calA)^{\td(\varphi)}$.
\end{proof}

Note that this tractability result differs from the intractability result of \Cref{thm:ltlmc-f-conp}
only in the allowed temporal operator. This shows that the expressive power of $\X$ is (obviously) tightly bounded to its allowed nesting depth.

\medskip


\begin{theorem}\label{thm:ltlmc-paranp-hard}For any non-empty set $T$ of temporal operators \emph{$(\LTLMC(T), \pw_\calA + \Delta)$} and \emph{$(\LTLMC(T), \tw_\calA + \Delta)$} are hard for \emph{para-$\coNP$}.
\end{theorem}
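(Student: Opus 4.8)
The plan is to prove coNP-hardness for a \emph{finite union of slices} and then invoke \Cref{thm:slice-hardness}. Concretely, I would reuse the reduction from the complement of 3SAT already carried out in \Cref{thm:ltlmc-f-conp}: given a 3CNF $\varphi$ over variables $x_1,\dots,x_n$, one builds the ``ladder'' structure $\calS$ of \Cref{fig:ltl-mc-conp} whose runs from $w_0$ are in bijection with the Boolean assignments, together with a formula $\psi$ such that every run of $\calS$ satisfies $\psi$ iff $\varphi$ is unsatisfiable. The key new observation is that $\calS$ has not only constant branching degree but also constant pathwidth, and since pathwidth bounds treewidth, both parameters $\pw_\calA+\Delta$ and $\tw_\calA+\Delta$ stay bounded by an absolute constant throughout the reduction. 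Hence the image of the reduction lands in finitely many slices, and \Cref{thm:slice-hardness} upgrades the coNP-hardness of these slices to para-coNP-hardness of the whole parameterised problem. (Only hardness is claimed, so no membership result is needed and the cases where $\LTLMC(T)$ is $\PSPACE$-complete cause no conflict.)

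First I would pin down the structural parameters of $\calS$. Its branching degree is $2$ (read off the figure). For the pathwidth I would exhibit the path decomposition whose $t$-th bag is $B_t\dfn\{\,u_t,d_t,u_{t+1},d_{t+1}\,\}$, where $u_t,d_t$ are the ``up'' and ``down'' worlds of column $t$ and $B_0$ additionally contains $w_0$. Every edge of $\calS$ connects column $t$ with column $t+1$ (the self-loops of the last column staying inside a single bag), so each edge lies in some $B_t$, and every world occurs in at most two consecutive bags, so the connectedness condition holds. This gives $\pw(\calS)\le 3$, hence $\pw_\calA+\Delta\le 5$ and $\tw_\calA+\Delta\le 5$ on all produced instances.

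Next I would reduce the operator set to the minimal cases. Since a formula in $\LTLform(T')$ is also a formula in $\LTLform(T)$ whenever $T'\subseteq T$, any reduction whose output uses only operators from some $T'\subseteq T$ is simultaneously a reduction for $T$; thus it suffices to treat the singleton fragments. If $\F\in T$, the formula $\psi\dfn\bigvee_{i}\bigl(\F\neg L_{i,1}\land\F\neg L_{i,2}\land\F\neg L_{i,3}\bigr)$ from \Cref{thm:ltlmc-f-conp} already works and has temporal depth $1$; if instead $\G\in T$ or $\U\in T$, I would rewrite the same formula using $\F\alpha\equiv\neg\G\neg\alpha$, resp.\ $\F\alpha\equiv\top\U\alpha$, which keeps the depth constant.

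The genuinely separate case is $T=\{\X\}$, and I expect it to be the main obstacle, because here $\F$ is unavailable and temporal depth is \emph{not} among the parameters (so \Cref{thm:ltl-x-cowp} does not apply and we are free to use deep $\X$-nestings). The idea is to exploit that $\calS$ is layered: along any run the world reached after exactly $t$ steps is the world encoding $x_t$. Hence ``$x_t$ is true on this run'' is expressed by $\X^t x_t$ and ``$x_t$ is false'' by $\X^t\neg x_t\equiv\neg\X^t x_t$, using $\X\neg\alpha\equiv\neg\X\alpha$. Replacing each literal in $\neg\varphi=\bigvee_i\bigwedge_j\overline{L_{i,j}}$ by the corresponding $\X$-addressed variable yields a formula $\psi\in\LTLform(\X)$ of depth at most $n$ for which every run satisfies $\psi$ iff every assignment falsifies $\varphi$, i.e.\ iff $\varphi$ is unsatisfiable. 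The depth grows with $n$, but this is harmless since $\X$-depth is bounded by neither parameter; the only quantities that must stay bounded are $\pw(\calS)$ and $\Delta(\calS)$, which we already controlled. With all cases covered, \Cref{thm:slice-hardness} delivers para-coNP-hardness for both parameterisations.
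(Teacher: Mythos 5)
Your proposal is correct and takes essentially the same route as the paper's proof: the paper likewise reuses the reduction of \Cref{thm:ltlmc-f-conp}, exhibits the same width-three path-decomposition of the ladder structure (bags $\{w_i^+, w_i^-, w_{i+1}^+, w_{i+1}^-\}$ plus an initial bag with $w_0$), and settles the $T = \{\X\}$ case by the identical trick of replacing $\F L$ by $\X^t L$ where $t$ addresses the layer of the literal's variable, noting only polynomial blowup. Your explicit handling of $\G$ and $\U$ via $\F\alpha \equiv \neg\G\neg\alpha$ and $\F\alpha \equiv \top\U\alpha$, and the explicit appeal to \Cref{thm:slice-hardness}, merely spell out steps the paper leaves implicit.
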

\begin{proof}
This is shown using the same argument as in \Cref{thm:ltlmc-f-conp}. 
The structure $\calS$ used there has a constant pathwidth (and therefore treewidth) of at most three:
One bag $B_0$ contains the worlds $w_0, w_1^+, w_1^-$, where $w_i^+$ ($w_i^-$) is the unique world in $\calS$ that has $x_i$ ($\neg x_i$) labeled. Further bags $B_i$ contain $w_i^+, w_{i+1}^+, w_i^-$ and $w_{i+1}^-$ for $1 \leq i < n$. Connecting the bags $B_i$ and $B_{i+1}$ for $0 \leq i < n$ results in a path-decomposition of $\calS$ with width at most three. The maximum branching of the structure is as well constant. For $\X \in T$ we modify the reduction given in the proof of \Cref{thm:ltlmc-f-conp}. Simply replace the subformula $\F L_i$ by $\X^i L_i$ in the formula and the reduction stays valid. This substitution leads only to polynomial blowup.
\end{proof}
\begin{corollary}
If $\F \in T$ or $\G \in T$, then \emph{$(\LTLMC(T), \pw_\calA + \Delta + \td)$} and \emph{$(\LTLMC(T), \tw_\calA + \Delta + \td)$} are hard for \emph{para-$\coNP$}.
\end{corollary}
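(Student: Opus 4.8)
The plan is to observe that the hardness reductions underlying \Cref{thm:ltlmc-paranp-hard} already produce formulas of \emph{constant} temporal depth whenever the operator is $\F$ or $\G$, so that adding $\td$ as a third parameter costs nothing. Concretely, the reduction in \Cref{thm:ltlmc-f-conp} maps a 3CNF $\varphi$ to $\psi = \bigvee_{i=1}^m (\F \neg L_{i,1} \land \F \neg L_{i,2} \land \F \neg L_{i,3})$, where every temporal operator sits directly in front of a literal with no nesting. Hence $\td(\psi) = 1$ independently of the input, while \Cref{thm:ltlmc-paranp-hard} already established that the accompanying structure $\calS$ satisfies $\pw_\calA(\calS), \tw_\calA(\calS) \le 3$ and has constant branching degree $\Delta$.

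For the case $\F \in T$, I would reuse this reduction verbatim. Since $\F \in T$ we have $\psi \in \LTLform(T)$, and the image of the reduction lands in a finite union of slices of $(\LTLMC(T), \pw_\calA + \Delta + \td)$, as each of the three parameters is bounded by an absolute constant. Because the reduction is correct and linear-time computable, it witnesses that this finite union of slices is $\coNP$-hard, and para-$\coNP$-hardness follows via \Cref{thm:slice-hardness}. The same reasoning applies to the treewidth version.

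For the case $\G \in T$, the only change is to rewrite the formula with $\F \alpha \equiv \neg \G \neg \alpha$. Applying this to each conjunct replaces $\F \neg L_{i,j}$ by $\neg \G L_{i,j}$, giving $\psi' = \bigvee_{i=1}^m (\neg \G L_{i,1} \land \neg \G L_{i,2} \land \neg \G L_{i,3}) \in \LTLform(T)$. Each $\G$ again sits directly above a literal, so $\td(\psi') = 1$, the structure $\calS$ is unchanged, and the slice-hardness argument carries over unaltered.

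The point that deserves care is \emph{why} the statement is confined to $\F \in T$ or $\G \in T$ rather than to arbitrary non-empty $T$ as in \Cref{thm:ltlmc-paranp-hard}. The $\X$-based variant there substitutes $\X^i L_i$ for $\F L_i$, and $\td(\X^i L_i) = i$ grows linearly with the number of variables; that reduction is therefore not bounded in the $\td$ parameter. This is exactly the obstruction one should expect, since for $T \subseteq \{\X\}$ the problem $(\LTLMC(T), \td + \Delta)$ is in fact $\FPT$, so no para-$\coNP$-hardness with $\td$ in the parameter is available from the $\X$ fragment.
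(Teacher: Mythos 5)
Your proposal is correct and matches the paper's intended derivation: the corollary follows because the reduction of \Cref{thm:ltlmc-f-conp} already uses a formula of temporal depth $1$ (with $\F\alpha \equiv \neg\G\neg\alpha$ covering the $\G$ case), while \Cref{thm:ltlmc-paranp-hard} supplies the constant pathwidth, treewidth, and branching degree of the structure, so para-$\coNP$-hardness follows via \Cref{thm:slice-hardness}. Your closing remark on why $\X$ is excluded---the substitution $\X^i L_i$ has unbounded temporal depth, and $(\LTLMC(\X), \td + \Delta) \in \FPT$ rules out such hardness---is also exactly the right explanation.
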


\begin{theorem}
If $\U \in T$ or $\{\X,\F\} \subseteq T$ or $\{\X,\G\} \subseteq T$, then \emph{$(\LTLMC(T), \pw_\calA + \Delta + \td)$} and \emph{$(\LTLMC(T), \tw_\calA + \Delta + \td)$} are complete for \emph{para-$\PSPACE$}.
\end{theorem}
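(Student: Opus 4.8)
The plan is to prove the two directions of the completeness claim separately, and for each of them to reuse machinery already assembled in the excerpt. Membership is immediate: since $\LTLMC(T) \subseteq \LTLMC(\X,\F,\G,\U)$ lies in $\PSPACE$ by \Cref{thm:ltlmc-in-pspace}, there is a deterministic machine deciding it in space $\size{x}^{\bigO{1}}$, so choosing the constant function $f \equiv 1$ in the definition of para-$\PSPACE$ places $(\LTLMC(T), \pw_\calA + \Delta + \td)$ — and, since pathwidth bounds treewidth, also the $\tw_\calA$ variant — in para-$\PSPACE$ regardless of the parameterisation.

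For hardness I would invoke the slice characterisation of \Cref{thm:slice-hardness}: it suffices to fix all three parameters to constants and show that the resulting slice $\LTLMClim{c}(T)$, restricted to structures of pathwidth at most $P$ and branching degree at most $D$, is already $\PSPACE$-hard. The source of hardness is \Cref{thm:ltl-hardness-td-paraPSPACE}, which states that for exactly these operator sets the bounded-depth satisfiability problem $\LTLSATlim{c}(T)$ is $\PSPACE$-complete; as deterministic space classes are closed under complement, bounded-depth \emph{unsatisfiability} is $\PSPACE$-hard too, and I would reduce the latter to universal model checking via a ``universal'' generator structure, reusing the bounded-branching, bounded-pathwidth gadget of \Cref{thm:ltlmc-f-conp,thm:ltlmc-paranp-hard}.

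Concretely, for a bounded-depth formula $\varphi$ over propositions $p_1,\dots,p_n$ I would build a fixed structure $\calA$ consisting of a loop of $n$ ``position'' worlds: the $i$-th position is realised by two worlds labelled $\mathit{pos}_i$, one additionally carrying $\mathit{bit}$, each reachable from both worlds of position $i-1$; a boundary proposition $b$ marks the start of each loop, and position $n$ returns to position $1$. Every run of $\calA$ then spells out an $\omega$-sequence of full assignments to $p_1,\dots,p_n$, encoded block by block, and every such sequence is realised by some run; the structure has branching degree $2$, constant pathwidth (a sliding window over consecutive positions, exactly as in \Cref{thm:ltlmc-paranp-hard}), and polynomially many worlds and propositions. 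I would translate $\varphi$ into a shallow $\hat\varphi$ that reads this encoding: each atom $p_i$ becomes a single-$\U$ gadget ``within the current block the position-$i$ world carries $\mathit{bit}$'', roughly $(\neg b)\,\U\,(\mathit{pos}_i \wedge \mathit{bit})$, while the macro-operators $\X$, $\F$, $\G$ are rerouted to the next, some, or all block boundaries, again using $\U$ and $b$. Since each macro-operator contributes only constant nesting, $\td(\hat\varphi)=\bigO{\td(\varphi)}$ stays bounded. Correctness then reads: $\varphi$ satisfiable $\iff$ some run of $\calA$ satisfies $\hat\varphi$ $\iff$ $(\calA,w_0)\not\models\neg\hat\varphi$, so $(\calA,w_0)\models\neg\hat\varphi$ on all runs exactly when $\varphi$ is unsatisfiable, which is the desired reduction; for $\U \in T$ the formula $\neg\hat\varphi$ already lies in $\LTLlim{\bigO{c}}(\U)$.

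The main obstacle I anticipate is the constant-depth \emph{within-block reading} for the sets $\{\X,\F\}\subseteq T$ and $\{\X,\G\}\subseteq T$, where $\U$ is unavailable: the naive way to address position $i$ inside a block of length $n$ is $\X^i$, whose nesting depth is unbounded and would violate the temporal-depth constraint, while a plain $\F(\mathit{pos}_i \wedge \mathit{bit})$ fails to isolate the \emph{current} block. To keep the depth constant I would replace the until-gadget by an $\{\X,\F\}$- resp.\ $\{\X,\G\}$-gadget that reads the position labels $\mathit{pos}_i$ already present in $\calA$ against the boundary marker $b$, mirroring the operator-specific constructions behind \Cref{thm:ltl-hardness-td-paraPSPACE} and using the identities $\F\alpha \equiv \top\U\alpha$ and $\G\alpha \equiv \neg\F\neg\alpha$ exploited earlier. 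Making this rerouting of macro-operators and the atom gadgets jointly stay at bounded nesting is the delicate point; once it is in place, the $\PSPACE$ upper bound of \Cref{thm:ltlmc-in-pspace} closes the completeness.
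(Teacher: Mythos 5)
Your membership argument and the slice-based setup via \Cref{thm:slice-hardness} are correct and agree with the paper, and your generator-structure reduction for the case $\U \in T$ is a plausible (genuinely different) route, modulo a boundary bug: at a block boundary $b$ holds, so your atom gadget $(\neg b)\,\U\,(\mathit{pos}_i \wedge \mathit{bit})$ is false there for every $i \geq 2$; this is repairable by an $\X$-shift or by relabelling $b$. The genuine gap is exactly the point you flag as ``delicate'': the cases $\{\X,\F\} \subseteq T$ and $\{\X,\G\} \subseteq T$, for which you offer no construction. Within your generic framework this gap cannot be closed, because the target problem is $\LTLMC(T)$, so the translated formula must stay inside $\LTLform(T)$, and the within-block addressing your atom gadget needs is an until-type property: $\U$ is not expressible from $\F$ and $\X$. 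Concretely, on your loop the labels $\mathit{pos}_i$ and $\mathit{bit}$ recur in every block, so $\F(\mathit{pos}_i \wedge \mathit{bit})$ is satisfied as soon as \emph{some} later block sets the bit and cannot be confined to the current block, while exact addressing requires $\X^i$, whose nesting depth grows with $n$ and destroys the bound on $\td$. So the two $\X$-cases of the theorem — which carry most of its content — remain unproven, and they do not follow from the $\U$-case.

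The paper avoids the generic satisfiability-to-model-checking route altogether. It starts from the dedicated model-checking hardness reductions of Demri and Schnoebelen (Sections 5 and 6 of \cite{demri_complexity_2002}), which already have constant temporal depth because the Kripke structure, not the formula, encodes the quantifier alternation of QBF; to bound the remaining parameters it substitutes 3QBF for QBF (yielding constant out-degree and constant pathwidth of the structure) and clones the one remaining high-degree world (Figure 5 of \cite{demri_complexity_2002}) into a binary tree of out-degree two, invoking stutter-invariance of $\U$ to see that this cloning does not affect satisfaction. If you want to rescue your approach for $\{\X,\F\}$ and $\{\X,\G\}$, you would likewise have to build the instance-specific quantifier structure into the Kripke structure itself rather than decode a generic assignment generator with a bounded-depth formula.
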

\begin{proof}
The hardness is shown similar to the reductions in Section 5 and 6 of \cite{demri_complexity_2002}. To obtain
a constant out-degree, resp., pathwidth in the structure, the problem 3QBF instead of QBF must be used which is also $\PSPACE$-complete.
3QBF is defined like QBF but has a matrix in 3CNF. The final world shown in Figure 5 of \cite{demri_complexity_2002}
has still a larger out-degree, but it can be cloned to a binary tree with out-degree two. This does not influence the satisfaction of the $\U$ formula as $\U$ is stutter-invariant.
\end{proof}

The presented results imply that even very simple Kripke structures are hard to check for certain properties. This is consistent to usual LTL model checking algorithms since they already have runtime exponential in $\size{\varphi}$ but only linear in $\size{\calA}$. Therefore we now turn to more parameterisations of the LTL formula itself.

\medskip

Let denote with $\nVar(\varphi)$ the number of distinct propositional variables occurring in $\varphi$.

\begin{theorem}\label{thm:ltl-one-variable-hardness}
The parameterised problems \emph{$(\LTLMC(\X), \tw_\varphi)$}, \emph{$(\LTLMC(\X), \nVar)$}, \emph{$(\LTLMC(\F), \tw_\varphi)$} and \emph{$(\LTLMC(\F), \nVar)$} are complete for \emph{para-$\coNP$}.
\end{theorem}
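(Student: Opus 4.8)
The plan is to prove membership and hardness separately. For membership, observe that $\{\X\}$ and $\{\F\}\subseteq\{\F,\G\}$, so \Cref{thm:ltlmc-in-conp} yields $\LTLMC(\X),\LTLMC(\F)\in\coNP$; any problem in $\coNP$ lies in para-$\coNP$ under any parameterisation (take the constant function for $f$), so all four problems are in para-$\coNP$ and only the lower bounds remain. For hardness I would invoke \Cref{thm:slice-hardness}: it suffices, for each of the four problems, to fix a constant $c$ and show that the restriction to instances whose parameter value is at most $c$ is already $\coNP$-hard. The source problem is the complement of $3$SAT, exactly as in \Cref{thm:ltlmc-f-conp}, but the reduction must be re-engineered so that the produced formula uses a constant number of propositions (to bound $\nVar$) and has constant treewidth (to bound $\tw_\varphi$).

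For the $\X$-fragment both bounds are obtained by one construction. Keep the choice-lattice of \Cref{fig:ltl-mc-conp} but collapse the $2n$ labels to a single proposition $p$: at level $i$ the upper world carries $p$ and the lower world carries nothing, so the truth of $p$ at position $i$ along a run encodes $x_i$. Positional addressing is then available, and I would replace every literal occurrence by $\X^i p$, resp. $\neg\X^i p$, setting $\psi \dfn \bigvee_{j}\big(\bar A_{j1}\land \bar A_{j2}\land \bar A_{j3}\big)$, where $\bar A_{jk}$ reads the relevant bit through $\X^{i}$. Then every run satisfies $\psi$ iff every assignment falsifies some clause, i.e.\ iff $\varphi$ is unsatisfiable. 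This $\psi$ has $\nVar=1$, and its treewidth is at most two: the graph $\calS(\psi)$ is a syntax tree all of whose leaves are the single variable $p$, so after identification the internal nodes form a tree and the node $p$ is adjacent to their leaf-parents, i.e.\ a tree plus one apex vertex; since adding a single vertex raises treewidth by at most one, $\tw_\varphi\le 2$. This settles both $(\LTLMC(\X),\nVar)$ and $(\LTLMC(\X),\tw_\varphi)$.

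For the $\F$-fragment the same template applies: runs encode assignments and $\psi$ expresses ``some clause is falsified''. Here $\G\equiv\neg\F\neg$ is available, and since temporal depth is not a parameter of these problems arbitrarily deep nestings of $\F$ and $\G$ are permitted. The plan is to make positions addressable by interleaving the value worlds with a marker proposition and letting a nested $\F/\G$ formula single out the relevant level, following the single-variable encoding of Demri and Schnoebelen \cite{demri_complexity_2002}; the formula shape still mirrors the clause/literal incidence of $\varphi$ while only constantly many propositions carry data. Bounded treewidth then follows from the same apex-forest computation as in the $\X$ case, and \Cref{thm:slice-hardness} together with \Cref{thm:ltlmc-in-conp} upgrades the resulting $\coNP$-hardness to para-$\coNP$-completeness.

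The hard part will be exactly this $\F$-addressing. Plain nested $\F$ over a marker proposition expresses only a threshold/suffix condition (``some marker of index at least $i$ satisfies $\cdot$''), not the exact selection ``the $i$-th marker satisfies $\cdot$'', because ``$i$-th'' is a statement about the past of a position and $\{\F,\G\}$ lacks the until-power needed to pin a position down. Producing an exact, constant-proposition reader therefore requires a careful gadget — separating levels with an auxiliary separator proposition and using $\G$ to forbid premature value occurrences — and then verifying both that the formula correctly evaluates $\varphi$ and that $\nVar$ and $\tw_\varphi$ stay constant. This correctness-and-boundedness verification for the $\F$ case is the technical heart of the proof; the $\X$ case, by contrast, is essentially routine once the single-proposition positional encoding is set up.
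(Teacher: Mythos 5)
Your proposal follows essentially the same route as the paper's proof: membership via Proposition~\ref{thm:ltlmc-in-conp} together with the trivial inclusion of $\coNP$ in para-$\coNP$, hardness via the constant-proposition reductions of Demri and Schnoebelen \cite{demri_complexity_2002} (one proposition with $\X^i$-addressing; for $\F$ alone the alternation gadget with a second proposition, exactly the ``alternation trick'' the paper spells out in the proof of \Cref{thm:ltl-f-tw-even-odd-conph}), the tree-plus-apex observation that a variable-free syntax graph is a tree and each distinct proposition raises the treewidth by at most one, and the lift to para-$\coNP$ via Proposition~\ref{thm:slice-hardness}. The details you leave open for the $\F$ case are precisely those the paper itself delegates to \cite{demri_complexity_2002}, so your argument is correct and at the same level of rigour as the published proof.
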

\begin{proof}
The hardness is shown by \citeauthor{demri_complexity_2002} by reduction to a formula with constant number of propositions.
This also leads to a formula with a constant structural treewidth as every syntax graph without variables already is a tree (\ie, has treewidth one) and every propoition increases the treewidth only at most one.
\end{proof}
%


\begin{theorem}\label{thm:ltl-f-tw-even-odd-conph}
\emph{$(\LTLMC(\F), \tw_\varphi)$} and \emph{$(\LTLMC(\F), \nVar)$} are complete for \emph{para-$\coNP$}.
\emph{$(\LTLMC(\F, \X), \tw_\varphi)$} and \emph{$(\LTLMC(\F, \X), \nVar)$} are complete for \emph{para-$\PSPACE$}.
\end{theorem}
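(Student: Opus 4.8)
The plan is to treat membership and hardness separately and to obtain the parameterised results from their classical counterparts via \Cref{thm:slice-hardness}. For the two para-$\coNP$ statements, membership holds because $\{\F\} \subseteq \{\F,\G\}$ gives $\LTLMC(\F) \in \coNP$ by \Cref{thm:ltlmc-in-conp}, and any $\coNP$-problem lies in para-$\coNP$ for every parameterisation; the matching hardness for both $\tw_\varphi$ and $\nVar$ has already been established in \Cref{thm:ltl-one-variable-hardness}. I would therefore only recall that a finite union of slices of these problems is $\coNP$-hard and invoke \Cref{thm:slice-hardness}. This first sentence thus needs no new argument and mainly fixes the contrast against the stronger $(\F,\X)$-fragment.

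For the para-$\PSPACE$ statements the membership direction is again immediate: $\LTLMC(\F,\X)\in\PSPACE$ by \Cref{thm:ltlmc-in-pspace}, so the problem lies in para-$\PSPACE$ under any parameterisation. The substance of the theorem is the hardness, and here I would once more reduce the parameterised question to a classical one: by \Cref{thm:slice-hardness} it suffices to exhibit a single slice, a bound $\nVar \leq k$ for some constant $k$ (and correspondingly a constant bound on $\tw_\varphi$), on which $\LTLMC(\F,\X)$ is already $\PSPACE$-hard.

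To produce such a slice I would reduce from 3QBF, the $\PSPACE$-complete validity problem for quantified Boolean formulas with a 3CNF matrix. The decisive point, following \citeauthor{demri_complexity_2002} \cite{demri_complexity_2002}, is that the whole quantifier alternation can be pushed into the Kripke structure, so that the accompanying $\LTLform(\F,\X)$-formula uses only a constant number of distinct propositions. Concretely, I would encode a truth assignment positionally along a path: the branching of the structure lets a path choose the value of each variable in turn, the nested address $\X^i$ reads off the value belonging to the $i$-th quantifier block, and the alternation of $\forall/\exists$ is realised by alternating $\G$ (expressible as $\G\alpha \equiv \neg\F\neg\alpha$, hence available in the $\F$-fragment) with $\F$ over these blocks. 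An even/odd parity marker tracks which block is currently being addressed and lets the formula step from one quantifier level to the next without spending a fresh proposition per variable; the 3CNF matrix is then evaluated by a constant-size Boolean combination of such positionally read literals.

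The main obstacle will be exactly this positional encoding: making the navigation between quantifier blocks correct while keeping the proposition count an absolute constant independent of $n$, and checking that the universal path-quantification of the model-checking semantics composes correctly with the internal $\F/\G$-alternation so as to capture full QBF validity. Once the slice is shown $\PSPACE$-hard with constantly many propositions, the treewidth bound follows exactly as in the proof of \Cref{thm:ltl-one-variable-hardness} (a syntax graph with constantly many distinct variables has constant treewidth), and \Cref{thm:slice-hardness} lifts both the $\nVar$- and the $\tw_\varphi$-slice to para-$\PSPACE$-hardness, which together with membership yields completeness.
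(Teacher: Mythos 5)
Your overall scaffolding coincides with the paper's: membership from \Cref{thm:ltlmc-in-conp}, resp., \Cref{thm:ltlmc-in-pspace}, classical hardness on a slice with constantly many propositions, the observation that constant $\nVar$ forces constant $\tw_\varphi$ (each proposition raises the treewidth of the syntax graph by at most one), and \Cref{thm:slice-hardness} to lift everything to the parameterised level; indeed the paper's own proof is little more than a citation of \citeauthor{demri_complexity_2002} plus this bookkeeping, and your para-$\coNP$ paragraph is literally \Cref{thm:ltl-one-variable-hardness} restated, which is fine. The gap lies in the one place where you try to supply substance, namely the sketched encoding of 3QBF for the $\{\X,\F\}$ fragment. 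You propose to realise the $\forall/\exists$ alternation ``by alternating $\G$ with $\F$ over these blocks'' along a path. That mechanism cannot work: $\LTLMC$ has a single outermost path quantifier, and once a run is fixed, $\F$ and $\G$ quantify over \emph{positions of that fixed run}, not over the branching choices that determine the variable values. Hence no nesting of $\F/\G$ over the blocks of a single assignment can express that the existential variables may depend on the universal ones quantified before them. You flag exactly this as ``the main obstacle'', but flagging it is not resolving it, and as stated the core of the para-$\PSPACE$ hardness --- precisely the content this theorem adds beyond \Cref{thm:ltl-one-variable-hardness} --- is not established.

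The actual reduction (going back to \citeauthor{SC85} and refined in \cite{demri_complexity_2002}) avoids path-internal quantifier alternation altogether: each run encodes an entire \emph{sequence} of assignment blocks, corresponding to a traversal of the evaluation tree of the quantified formula, where the universal variables are enumerated in a fixed counting discipline and the formula uses $\X$-addressing between consecutive blocks to enforce that existential variables are repeated unchanged until a more significant universal variable flips, and that the matrix holds in every block. A single proposition labelled in worlds of different depth then suffices, because exact positions are addressable by $\X^i$. Note also that you have transplanted the paper's ``alternation trick'' to the wrong fragment: the even/odd parity marker with two variables belongs to the $\F$-only ($\coNP$) case, where reflexivity of $\F$ makes exact depths unreachable without alternating markers; it plays no role in encoding quantifier alternation for $\{\X,\F\}$. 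So your membership arguments, the first sentence of the theorem, and the lifting machinery are all correct and match the paper, but the hardness kernel of the para-$\PSPACE$ part needs to be replaced by the block-sequence encoding just described (or by an honest appeal to \cite{demri_complexity_2002}, as the paper does).
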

\begin{proof}
Shown in \cite{demri_complexity_2002}. $\X,\F$ can encode the PSPACE-hard QBF problem in a model checking instance using only a single proposition, but label this proposition in worlds of different depth. If the only operator is $\F$, then propositional satisfiability can be simulated. However then two variables are required, as a world of certain depth can only be exactly reached by $\F$ using a certain form of alternation. This ``alternation'' trick is necessary since future is reflexive (like in most temporal logics) and hence the present is a part of the future.
\end{proof}

It seems that in the reductions a large temporal depth can neither be avoided for $\F$ nor for $\X$. Therefore it is unlikely that para-$\coNP$-hardness or para-$\PSPACE$-hardness for model checking stays \wrt the parameterisation $\td + \tw_\varphi$. Therefore we now consider variants of the \emph{Tiling} problem for alternative reductions.

\begin{define}[Tiling]
Let $C$ be a finite set of \emph{colors} and $D \subseteq C^4$ a set of \emph{tiles}. Every tile has four sides, namely \emph{up}, \emph{down}, \emph{left} and \emph{right}, which each have a color $c \in C$.
Use the quadruple notation to explicitly write the colors of a tile: $d = \langle c_u, c_d, c_l, c_r \rangle$.
A \emph{$D$-tiling} for a discrete area $R \subseteq \N \times \N$ is a function $\gamma : R \to D$.

Write $\gamma(x,y) = \langle (x,y)_u, (x,y)_d, (x,y)_l, (x,y)_r \rangle$.
Then $\gamma$ is a \emph{valid tiling} if for every $(x,y), (x',y') \in R$ holds:
\begin{itemize}
  \item if $x' = x$ and $y' = y + 1$, then $(x,y)_d = (x',y')_u$,
  \item if $x' = x + 1$ and $y' = y$, then $(x,y)_r = (x',y')_l$.
\end{itemize}
\end{define}

\begin{define}
The problem \textsc{SquareTiling} contains the instances $(C, D, \langle k \rangle_1)$ for which the $k\times k$-grid has a valid $D$-tiling. Here, $\langle \cdot \rangle_1$ denotes a unary encoding.
\end{define}


\begin{proposition}[\cite{tiling}]
Let $\kappa(C, D, \langle k \rangle_1) \dfn k$. Then the parameterised problem $(\text{\textsc{SquareTiling}},\kappa)$
is \emph{$\W{1}$}-complete.
\end{proposition}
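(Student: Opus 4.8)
The plan is to prove $\W{1}$-completeness in the two standard directions: membership in $\W{1}$ and $\W{1}$-hardness, both via fpt-reductions involving the canonical $\W{1}$-complete clique problem (using that fpt-reductions compose, so reducing to a problem already known to be in $\W{1}$ suffices for membership).

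For membership, I would observe that a valid tiling of the $k \times k$ grid is exactly a solution of a binary constraint satisfaction problem with precisely $k^2$ variables (one per cell, each ranging over the tile set $D$) and $\bigO{k^2}$ binary constraints (the horizontal and vertical matching conditions). Such a CSP encodes as a partitioned (multicoloured) clique instance in the usual way: introduce one vertex $\langle (x,y), d \rangle$ for every cell $(x,y)$ and every candidate tile $d \in D$, colour it by its cell, and join two vertices from \emph{different} cells by an edge exactly when the two tiles are compatible — always for non-adjacent cells, and matching on the shared side for adjacent cells. A multicoloured clique using all $k^2$ cell-colours then corresponds precisely to a choice of one compatible tile per cell, i.e.\ to a valid tiling. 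Since $k^2$ is a function of the parameter $k$ alone and the construction is polynomial, this is an fpt-reduction to Multicoloured Clique with parameter $k^2$; as that problem is $\W{1}$-complete, $(\textsc{SquareTiling}, \kappa)$ lies in $\W{1}$.

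For hardness I would go the other way and fpt-reduce Multicoloured Clique (itself $\W{1}$-complete) to \textsc{SquareTiling}, using a grid whose side length $g(k) = k$ depends only on $k$. Given a graph with vertex classes $V_1, \ldots, V_k$, the idea is a $k \times k$ grid in which row $i$ ``guesses'' the clique vertex $v_i \in V_i$ and column $j$ ``guesses'' $v_j \in V_j$. The tile colours propagate these guesses: the left/right colour of a tile carries the vertex chosen for its row, forcing the whole row to agree on one $v_i$, while the up/down colour carries the vertex chosen for its column, forcing the whole column to agree on one $v_j$. At an off-diagonal cell $(i,j)$ the tile then simultaneously sees $v_i$ (from the row) and $v_j$ (from the column), and I only admit tiles for which $\{v_i,v_j\}$ is an edge; the diagonal cells $(i,i)$ synchronise the ``row copy'' and the ``column copy'' of $v_i$ so that rows and columns refer to the same vertices. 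A valid tiling then exists iff the chosen vertices form a multicoloured clique, the number of colours and tiles is polynomial in the graph, and the side length depends only on $k$, so this is an fpt-reduction.

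The routine steps — writing out the explicit propagation tile set and checking that side-compatibility equals agreement — are mechanical. The main obstacle is the hardness direction, specifically the diagonal synchronisation: the horizontal and vertical propagations use independent colour coordinates, so one must ensure on the diagonal that the vertex propagated horizontally in row $i$ coincides with the one propagated vertically in column $i$, and that boundary cells are handled so no spurious tilings arise. Once this grid-tiling gadget is correct, the equivalence in both directions follows by tracing the forced propagation along each row and column.
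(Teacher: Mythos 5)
Your proposal is essentially sound, but note first that the paper does not prove this proposition at all: it is imported verbatim from Cai, Chen, Downey and Fellows \cite{tiling}, whose proof runs through the correspondence between $k$-step computations of a non-deterministic single-tape Turing machine and $k\times k$ tilings (the rows of the square encode successive machine configurations) --- that is, exactly the problem SSTMH that this paper uses to \emph{define} $\W{1}$, which makes both directions essentially immediate against that definition. Your reconstruction via Multicoloured Clique is a genuinely different and viable route. The membership direction (a tiling is a binary CSP with $k^2$ variables over domain $D$, encoded as a multicoloured clique instance with $k^2$ colour classes) is clean and correct, since $\W{1}$ is closed under fpt-reductions and the new parameter $k^2$ is bounded by a computable function of $k$. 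For hardness, your row/column propagation gadget works, but the crux is not the diagonal synchronisation you flag --- admitting at cell $(i,i)$ only tiles whose row-copy and column-copy of the vertex coincide is immediate --- but positional anchoring, which you only gesture at under ``boundary cells''. The paper's \textsc{SquareTiling} has no border or origin constraints and a single tile set shared by all cells, so ``the tile at cell $(i,j)$'' is meaningless unless the colours themselves encode coordinates: let horizontal colours carry pairs (row index, column transition $j \to j+1$) with $j$ ranging over $[k-1]$ only, and symmetrically for vertical colours; since no colour for a transition $k \to k+1$ exists, a row of $k$ cells can only realise the intended indices $1,\dots,k$ in order, which pins down absolute positions without any border condition. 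With that made explicit, your equivalence goes through and the map is an fpt-reduction (polynomially many tiles and colours, parameter preserved). The trade-off between the two proofs: the TM-tableau argument of \cite{tiling} works directly against the paper's machine-based definition of $\W{1}$, whereas yours is more modular but leans on the known $\W{1}$-completeness of Multicoloured Clique as an external ingredient.
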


\begin{theorem}\label{thm:ltl-mc-is-w1-hard}
\emph{$(\LTLMC(\X), \td + \pw_\varphi)$} and \emph{$(\LTLMC(\X), \td + \tw_\varphi)$} are \emph{$\coW{1}$}-hard.
\end{theorem}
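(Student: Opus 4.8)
The plan is to give an fpt-reduction from \textsc{SquareTiling}, which is $\W{1}$-complete by the preceding proposition, to the \emph{complement} of our model checking problem; establishing $\W{1}$-hardness of $\overline{\LTLMC(\X)}$ is by definition the same as $\coW{1}$-hardness of $\LTLMC(\X)$. Since the complement of the universal model checking problem is the \emph{existential} one, the idea is to encode a candidate tiling as a run of a Kripke structure and let a formula $\chi$ verify that the run is a valid tiling. With $\varphi \dfn \neg\chi$ a run then satisfies $\varphi$ exactly when it is \emph{not} a valid tiling, so all runs from $w_0$ satisfy $\varphi$ iff no valid tiling exists. The essential point that makes the parameter controllable is that the grid is a $k\times k$ square with $k$ given in \emph{unary}: a run needs only $k^2$ relevant worlds and $\chi$ only temporal depth $\bigO{k^2}$, which is a function of the parameter $k$ alone.

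First I would build the structure $\calA$. Fix the row-major order of the $k^2$ cells and let layer $i$ consist of one world per tile $t \in D$; consecutive layers are fully connected, an initial world $w_0$ branches into all of layer $1$, and the last layer loops in a sink so that $R$ is total. Thus the runs from $w_0$ are in bijection with arbitrary tile assignments to the grid. Each world in layer $i$ carrying tile $t$ is labelled with \emph{position-indexed} propositions $u^{(i)}_c, d^{(i)}_c, l^{(i)}_c, r^{(i)}_c$ that expose the four side colours of $t$ at this particular position $i$. This indexing is the heart of the construction and I return to it below.

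Next I would write the verification formula as a conjunction of purely \emph{local} adjacency checks, using $\X$ to reach the relevant neighbour: horizontal neighbours sit at consecutive positions and vertical neighbours exactly $k$ positions apart, so I would set
\[
\chi \dfn \bigwedge_{i=1}^{k^2} \X^i\big[\,H_i \wedge V_i\,\big],\quad H_i \dfn \bigvee_{c\in C}\big(r^{(i)}_c \wedge \X\, l^{(i+1)}_c\big),\quad V_i \dfn \bigvee_{c\in C}\big(d^{(i)}_c \wedge \X^k\, u^{(i+k)}_c\big),
\]
with $H_i \dfn \top$ whenever cell $i$ lies on the right border and $V_i \dfn \top$ whenever cell $i$ lies in the bottom row. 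A run satisfies $\chi$ iff all horizontal and vertical colour matches hold, i.e. iff it is a valid tiling; the equivalence $x\in\textsc{SquareTiling} \iff f(x)\notin\LTLMC(\X)$ for $\varphi\dfn\neg\chi$ then follows, and the whole construction is polynomial since $k$ is unary.

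The hard part — and the reason for the position-indexed propositions — is bounding $\pw_\varphi$ (and hence $\tw_\varphi$) by a function of $k$ while the colour set $C$ is part of the input and unbounded in $k$. If one used global colour propositions $u_c,d_c,\dots$, then after the syntax graph identifies equal variables a single colour node would become adjacent to the $\Theta(k^2)$ adjacency gadgets mentioning it, forcing pathwidth $\Omega(\size{C})$ and breaking the fpt-bound. The indexing avoids this entirely: every proposition $r^{(i)}_c, l^{(i)}_c, u^{(i)}_c, d^{(i)}_c$ occurs \emph{exactly once} in $\chi$, so no identification takes place and $\calS_\varphi$ is literally a tree (treewidth $1$); its pathwidth is constant, being a caterpillar of conjunctions whose pendants are shallow colour-disjunctions and plain $\X$-chains, and $\X$-chains do not raise pathwidth. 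Together with $\td(\varphi)=\bigO{k^2}$ this bounds the target parameter by a function of $k$, so $f$ is an fpt-reduction and $(\LTLMC(\X),\td+\pw_\varphi)$ is $\coW{1}$-hard; because $\tw \le \pw$ the same $f$ also witnesses $\coW{1}$-hardness of $(\LTLMC(\X),\td+\tw_\varphi)$.
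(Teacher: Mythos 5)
Your proposal is correct and follows essentially the same route as the paper: an fpt-reduction from \textsc{SquareTiling} to the complement of the model checking problem, built on the same serialized structure (one fully connected layer of $\size{D}$ tile-worlds per grid cell, an initial world, a looping sink) with position-indexed colour propositions and $\X^i$-addressing of the $i$-th cell, yielding temporal depth $\bigO{k^2}$. Your only deviation is harmless and in fact slightly sharper: handling the borders statically (setting $H_i, V_i \dfn \top$) and matching colours via a disjunction in which every indexed proposition occurs exactly once makes the syntax graph a tree of constant pathwidth, whereas the paper shares $q_\text{border}$ and $q_\text{end}$ across gadgets and proves the weaker (but equally sufficient) bound $2k^2 + k + 15$ on the pathwidth.
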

\begin{proof}
The idea of the proof, a reduction from \textsc{SquareTiling} to the complement of the problems in the theorem, is to use the path semantics of LTL to describe a valid tiling of the $k \times k$ grid: For every \textsc{SquareTiling} instance $(C, D, \langle k \rangle_1)$ we construct a formula $\psi$ and structure $\calS$. $\psi$ will have $k$-bounded temporal depth and structural pathwidth. The Kripke structure $\calS$ however will have unbounded branching degree $\Delta$ (which is unlikely to be avoided as $\LTLMC(\X)$ is already in $\FPT$ with parameter $\td + \Delta$).

Construct $\calS$ as follows:
\begin{itemize}
  \item Add worlds $w_\text{start}$ and $w_\text{end}$ with the proposition
  $q_\text{end}$ labeled in $w_\text{end}$.
  \item For every tile $d \in D$ and for every $i \in [k^2]$ add a world $w^i_{d}$.
  \item Connect $w_\text{start}$ to $w^1_{d}$ for every $d \in D$.
  \item Connect $w^{\,k^2}_{d}$ to $w_\text{end}$ for every $d \in D$.
  \item Connect $w^i_{d}$ to $w^{i+1}_{d'}$ for every $d, d' \in D$ and $1 \leq i < k^2$.
  \item Connect $w_\text{end}$ to itself.
  \item In every world $w^i_{d}$ with $d = \langle c_u, c_d, c_l, c_r \rangle$
  label propositional variables $c^i_u$, $c^i_d$, $c^i_l$, $c^i_r$.
  \item In every world $w^i_{d}$ where $i = k \cdot j$ for $j \in [k]$ label a propositional
  variable $q_\text{border}$.
\end{itemize}

\begin{figure}
\centering
\begin{tikzpicture}[->,>=stealth',shorten >=1pt,auto,node distance=0.8cm,%
        thick, %
        world/.style={circle,fill=black!20,draw,minimum size=0.4cm,inner sep=0pt}, %
        world2/.style={circle,dotted,draw,minimum size=0.4cm,inner sep=0pt}, %
        dummy/.style={inner sep=0pt,node distance=0.8cm},yscale=0.75]

\node[world] (s0) at (0cm,-3.6cm) {};
\node[world] (sEnd) at (6cm,-3.6cm) {};
\node[dummy] (l0) [node distance=0.49cm, above left of=s0] {$w_\text{start}$};
\node[dummy] (l0) [node distance=0.45cm, above right of=sEnd] {$w_\text{end}$};

\node[dummy] (dots) at (3.5cm,-2.0cm) {$\ldots$};
\node[dummy] (dots2) at (3.5cm,-4.5cm) {$\ldots$};
\node[dummy] (dots3) at (1.5cm,-3.4cm) {$\vdots$};
\node[dummy] (dots4) at (4.5cm,-3.4cm) {$\vdots$};

\foreach \i in {1,...,2}{
\foreach \d in {1,...,2}{
\node[world] (s\i\d) at (1cm * \i, -1cm * \d) {};
}}
\foreach \i in {1,...,2}{
\node[world] (s\i5) at (1cm * \i, -1cm * 5) {};
}
\foreach \i in {1,...,5}{
\node[world2] (s\i3) at (1cm * \i, -1cm * 3) {};
\node[world2] (s\i4) at (1cm * \i, -1cm * 4) {};
}
\foreach \d in {1,...,2}{
\node[world2] (s3\d) at (1cm * 3, -1cm * \d) {};
\node[world2] (s4\d) at (1cm * 4, -1cm * \d) {};
\node[world] (s5\d) at (1cm * 5, -1cm * \d) {};
}

\node[world2] (s35) at (1cm * 3, -1cm * 5) {};
\node[world2] (s45) at (1cm * 4, -1cm * 5) {};
\node[world] (s55) at (1cm * 5, -1cm * 5) {};

\node[dummy] (l51) at (1cm * 5, -1cm * 1 + 0.7cm) {$w^{\,k^2}_{d_1}$};
\node[dummy] (l55) at (1cm * 5, -1cm * 5 - 0.7cm) {$w^{\,k^2}_{d_n}$};
\foreach \i in {1,...,2}{
\node[dummy] (l\i1) at (1cm * \i, -1cm + 0.7cm) {$w^{\,\i}_{d_1}$};
\node[dummy] (l\i5) at (1cm * \i, -1cm * 5 - 0.7cm) {$w^{\,\i}_{d_n}$};
}

\foreach \d in {1,...,2}{
\foreach \e in {1,...,2}{
\path[->] (s1\d) edge node{} (s2\e);
}}

\path[->,dotted]
(s0) edge node{} (s14)
(s0) edge node{} (s13);
\path[->] (s0) edge node{} (s15);

\foreach \i in {1,...,1}{
\foreach \d in {1,...,3}{
\pgfmathparse{int(\i + 1)}
\path[->,dotted] (s\i\d) edge node{} (s\pgfmathresult3);
}}

\foreach \d in {1,...,3}{
\foreach \e in {1,...,3}{
\path[->,dotted] (s2\d) edge node{} (s3\e);
\path[->,dotted] (s4\d) edge node{} (s5\e);
}}

\foreach \d in {4,...,5}{
\foreach \e in {4,...,5}{
\path[->,dotted] (s4\d) edge node{} (s5\e);
}}

\foreach \i in {1,...,2}{
\foreach \d in {4,...,5}{
\foreach \e in {4,...,5}{
\pgfmathparse{int(\i + 1)}
\path[->,dotted] (s\i\d) edge node{} (s\pgfmathresult\e);
}}}

\foreach \d in {1,...,2}{
\path[->] (s0) edge node{} (s1\d);
\path[->] (s5\d) edge node{} (sEnd);
}

\path[->]
(s15) edge node{} (s25)
(s55) edge node{} (sEnd)
(sEnd) edge [loop right] node{} (sEnd);

\path[->,dotted]
(s53) edge node{} (sEnd)
(s54) edge node{} (sEnd);

\path[->,dotted] (s13) edge node{} (s21)
(s13) edge node{} (s22)
(s13) edge node{} (s23);
\end{tikzpicture}
\caption{Structure that models square tilings as runs.}\label{fig:ltlmc-w1-tiling}
\end{figure}
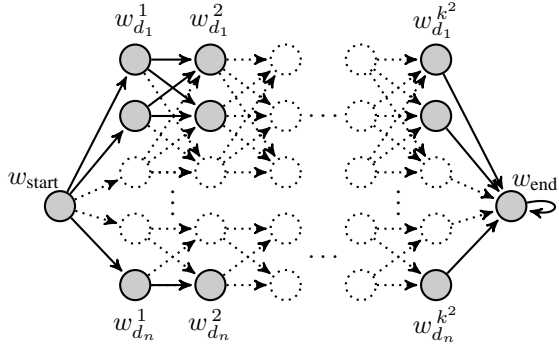

The structure $\calS$ is illustrated in \Cref{fig:ltlmc-w1-tiling}.
It models (not necessarily valid) tilings $\gamma$ as runs from $w_\text{start}$ by ``serializing'' the square into a path: It contains $k$ worlds for the first row, another $k$ worlds for the second row appended to the first $w$ worlds, and so on to the $k$-th row, resulting in a total of $k^2$ worlds on each path (besides $w_\text{start}$ and $w_\text{end}$). At the same time, there are $\size{D}$ successors that a path can use to select the next tile in the current (or next) row: For every place $(i,j) \in [k] \times [k]$ a tile $d$ is selected by visiting the corresponding world $w^{(i-1)\cdot k + j}_{d}$.

Now we give the path formulas that verify that the tiling $\gamma$ described by a run $\pi \in \Pi(w_\text{start})$ is valid.
\begin{align*}
\psi^i_{c,r} &\dfn \left[q_\text{border} \lor %
\left( c^i_r \rightarrow \X( q_\text{end} \lor c^{i+1}_l )\right)\right]\\
\psi^i_{c,d} &\dfn \left[c^i_d \rightarrow \X^k\left(q_\text{end} \lor c^{i+k}_u\right) \right]
\end{align*}

The formula $\psi^i_{c,r}$ is true in a path $\pi$ starting in a world $w^i_{d}$ (which has color $c$ to the right) if $\pi$ chooses a matching successor: Either $w^i_d$ is a \emph{border} and the color to the right is not relevant, or $w^i_d$ has $w_\text{end}$ as immediate successor and the tiling is complete, or the color matches the \emph{left} color of the next tile.

Similar, the formula $\psi^i_{c,d}$ ensures that the tile directly below the current one (which lies in distance exactly $k$ in the structure) has the matching \emph{up} color or is already beyond the last row.

We form the conjunction over every color $c$ and every $i$:
\[
\psi \dfn \bigwedge_{i = 1}^{k^2} \left[ \X^i \bigwedge_{c \in C} \left(\psi^i_{c,r} \land \psi^i_{c,d} \right) \right]
\]

\begin{claim}
$\calS$ and $\psi$ can be constructed in fpt-time.
\end{claim}
\begin{proof2}
The structure $\calS$ can be constructed in time $\bigO{\size{D}^2 \cdot k^2}$ and the formula $\psi$ can be constructed in time $\bigO{\size{C} \cdot k^3}$, which is both polynomial since $k$ is encoded unary in \textsc{SquareTiling}. \qedBlack
\end{proof2}

\begin{claim}
Let $\pi = (w_\text{start}, w^1_{f(1)}, w^2_{f(2)}, \ldots, w^{k^2}_{f(k^2)}, w_\text{end}, \ldots)$ be a run through $\calS$,
where $f \colon [k^2] \to D$ selects the tile at each step.
Then $\pi \models \psi$ if and only if $f(1), f(2), \ldots, f(k^2)$ form a valid tiling of $[k] \times [k]$.
\end{claim}
\begin{proof2}
``$\Rightarrow$'': Assume there are $(x,y)$ and $(x',y')$ such that the tiling conditions are violated.
\begin{itemize}
  \item Case 1: $x' = x + 1$ and $y' = y$. Then $x \neq k$, $(x,y)$ has right color $c$ and $(x+1,y)$ has left color $c' \neq c$. Let $i \dfn (x - 1) \cdot k + y$. By definition of $\pi$ it holds that $\pi[1] = w_\text{start}$ and $\pi[i + 1] = w^i_{f(i)}$. But as $w^i_{f(i)}$ is not a border and $\pi_{\geq i + 1} \models \psi^i_{c,r}$, so the successor has $c$ as left color. But this means that $c^i_r$ is labeled in $w^i_{f(i)}$ and $c^{i+1}_l$ is labeled in $w^{i+1}_{f(i+1)}$, contradiction to $c' \neq c$.
  \item Case 2: $x' = x$ and $y' = y + 1$ which is similar proven as Case 1.
\end{itemize}

``$\Leftarrow$'': Let $f(1), f(2), \ldots, f(k^2)$ be a valid tiling $\gamma$ of $[k] \times [k]$. Assume that $\neg \psi$ holds, \ie, there is a color $c$ and an $i$ \st{} $\pi_{\geq i + 1}$ does not satisfy $\psi^i_{c,r} \land \psi^i_{c,d}$. If $\psi^i_{c,r}$ is false, then $w^i_{f(i)}$ is not a border but also has a different \emph{right} color than its successor on $\pi$ has as \emph{left} color. But then $\gamma$ would not be a valid tiling. The case that $\psi^i_{c,d}$ is false can be handled analogously.
\qedBlack
\end{proof2}

It is easy to see that every run $\pi$ through $\calS$ from $w_\text{start}$ has the form as in the above claim, \ie, $\pi = (w_\text{start}, w^1_{f(1)}, w^2_{f(2)}, \ldots, w^{k^2}_{f(k^2)}, w_\text{end}, \ldots)$. Hence we get $(C, D, k) \in \text{\textsc{SquareTiling}}$ if and only if $\exists \pi \in \Pi(w_\text{start}) \, : \, (\calS,\pi) \models \psi$ if and only if $(\neg \psi, \calS, w_\text{start}) \notin \LTLMC(\X)$.

\begin{claim}
The formula $\psi$ has temporal depth $k^2 + k$ and structural pathwidth at most $2k^2 + k + 15$.
\end{claim}
\begin{proof2}
The temporal depth of $k^2 + k$ is the nesting depth of $\X$ operators in $\psi$.

For the pathwidth we construct a path-decomposition $\calP$ of $\psi$ as follows: For every $i \in [k^2]$ and every color $c \in C$ we create an isolated bag $B^i_c$. The bag $B^i_c$ contains the nodes representing
\begin{itemize}
  \item the Boolean connectives $\lor, \rightarrow$ and $\lor$ in $\psi^i_{c,r}$,
  \item the Boolean connectives $\rightarrow$ and $\lor$ in $\psi^i_{c,d}$,
  \item the variables $q_\text{border}$, $q_\text{end}$, $c^i_r$, $c^{i+1}_l$, $c^i_d$ and $c^{i+k}_u$,
  \item the single $\X$-operator in $\psi^i_{c,r}$,
  \item the $k$ $\X$-operators in $\psi^i_{c,d}$.
\end{itemize}

The isolated bag covers every edge between nodes representing subformulas of $\psi^i_{c,r}$ and $\psi^i_{c,d}$ with a width of $\size{B^i_c} = 3 + 2 + 6 + 1 + k = k + 12$. Also every subformula of $\psi^i_{c,r}$ and $\psi^i_{c,d}$
except $q_\text{border}$ and $q_\text{end}$ occurs exactly once in $\psi$, hence every such subformula of $\psi$ 
trivially induces a connected path in $\calP$. But as $q_\text{border}$ and $q_\text{end}$ are added into every bag $B^i_c$ they
also induce a connected path as soon as the bags are connected.

To handle the remaining connectives including the ``big conjunctions'' of size $\size{C}$, proceed as follows: First for every formula $\xi^i_c \dfn \left(\psi^i_{c,r} \land \psi^i_{c,d} \right)$, add $\xi^i_c$ to $B^i_c$.

Assume that the colors are ordered as $c_1, c_2, \ldots, c_\size{C}$, and that the big conjunctions have the structure
$((((\xi_1 \land \xi_2) \land \xi_3) \ldots) \land \xi_\size{C})$.
For every $j$, $1 \leq j < \size{C}$ then connect the bags $B^i_{c_j}$ and $B^i_{c_{j+1}}$ by inserting an edge in $\calP$, and add the $j$-th $\land$-node into both bags. Then after inserting the last conjunction, add the $i$ nodes for $\X^i$ to $B^i_{c_{\size{C}}}$, and finally add the nodes for the conjunction of size $k^2$ to every bag. These steps increase the size of every bag by at most $2 k^2 + 3$.

As $\calP$ now consists of $k^2$ disconnected sequences of $\size{C}$ bags each, concatenate them into a path in arbitrary order. This leads to $\calP$ being a connected path; and the variables $q_\text{border}$ and $q_\text{end}$ as well as the nodes of the $k^2$-conjunction now induce connected subpaths. \qedBlack
\end{proof2}
From the claims a valid fpt-reduction follows and thereby we conclude the theorem.
\end{proof}

Note that with only $\X$ the temporal depth of $k^2$ again is unlikely to be avoided, as for fixed temporal depth
the problem $\LTLMC(\X)$ is already solvable in logspace and in fact equivalent to checking assignments of 
propositional formulas \cite{demri_complexity_2002}.

\begin{theorem}\label{thm:ltl-mc-u-w1-hard}Let $\F \in T$, $\G \in T$ or $\U \in T$.
Then \emph{$(\LTLMC(T), \td + \pw_\varphi)$} and \emph{$(\LTLMC(T), \td + \tw_\varphi)$} are \emph{$\coW{1}$}-hard.
\end{theorem}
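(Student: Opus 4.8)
The plan is to adapt the \textsc{SquareTiling} reduction of \Cref{thm:ltl-mc-is-w1-hard} so that it works with $\F$, $\G$ and $\U$ in place of $\X$. First I would settle the case $\F \in T$, building a formula that uses only $\F$; the cases $\G \in T$ and $\U \in T$ then follow at once by rewriting with the equivalences $\F\alpha \equiv \neg\G\neg\alpha$ and $\F\alpha \equiv \top\U\alpha$. These leave the temporal depth unchanged and, being purely local rewrites that add $\bigO{1}$ nodes per operator, increase the structural pathwidth only by an additive constant; both rewritten formulas stay within $\LTLform(T)$ since they add only Boolean connectives and $\top$. The whole difficulty is that the earlier construction jumped to the world at position $i$ and to its neighbours by the \emph{exact}-distance operators $\X^i$, $\X$ and $\X^k$, whereas $\F$ cannot count steps and is moreover reflexive.

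My way around this is to make positions directly addressable. Reusing the structure $\calS$ of \Cref{thm:ltl-mc-is-w1-hard} (whose unbounded branching degree is irrelevant for this parameterisation), I would label every world $w^i_d$ additionally with a fresh proposition $\mathit{pos}_i$, and label every world of the last row (\ie $k^2-k < i \le k^2$) with a proposition $q_\text{lastrow}$. Since positions strictly increase along every run $\pi\in\Pi(w_\text{start})$, each marker $\mathit{pos}_i$ occurs exactly once on $\pi$, so $\F$ behaves like an exact jump: $\F(\mathit{pos}_j\land\alpha)$ holds on a suffix of $\pi$ iff the unique position-$j$ world satisfies $\alpha$. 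Accordingly I would replace the matching formulas by
\[
\psi^i_{c,r} \dfn q_\text{border} \lor \bigl(c^i_r \rightarrow \F(\mathit{pos}_{i+1}\land c^{i+1}_l)\bigr),
\]
\[
\psi^i_{c,d} \dfn q_\text{lastrow} \lor \bigl(c^i_d \rightarrow \F(\mathit{pos}_{i+k}\land c^{i+k}_u)\bigr),
\]
\[
\psi \dfn \bigwedge_{i=1}^{k^2}\bigwedge_{c\in C}\Bigl(\F(\mathit{pos}_i\land\psi^i_{c,r})\land\F(\mathit{pos}_i\land\psi^i_{c,d})\Bigr).
\]
Here $q_\text{border}$ and $q_\text{lastrow}$ make the horizontal, \resp\ vertical, constraint vacuous exactly at a row's right boundary, \resp\ in the last row --- precisely where the $\X$-version relied on overshooting into $w_\text{end}$; for all other $i$ the targets $i+1$ and $i+k$ are genuine tile worlds, so no $q_\text{end}$ disjunct is required. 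Notably $\psi$ now has the constant temporal depth $2$.

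The correctness argument is then the one of \Cref{thm:ltl-mc-is-w1-hard}: a run encodes a tiling $f(1),\dots,f(k^2)$, and by uniqueness of the markers $\psi^i_{c,r}$ (\resp\ $\psi^i_{c,d}$) expresses that the right (\resp\ down) colour of $f(i)$ matches the left (\resp\ up) colour of its in-grid neighbour. Hence $(C,D,\langle k\rangle_1)\in\textsc{SquareTiling}$ iff some run of $\calS$ from $w_\text{start}$ satisfies $\psi$ iff $(\neg\psi,\calS,w_\text{start})\notin\LTLMC(\F)$, and this reduction to the complement is computable in polynomial time since $k$ is encoded in unary.

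The step I expect to be the main obstacle is bounding the structural pathwidth of $\psi$ by a function of $k$ that is \emph{independent} of $\size{C}$ and $\size{D}$. I would mirror the bag construction of \Cref{thm:ltl-mc-is-w1-hard}: use one local bag $B^i_c$ per position $i$ and colour $c$ carrying the $\bigO{1}$ connectives, the four $\F$-operators and the variables $c^i_r,c^i_d,c^{i+1}_l,c^{i+k}_u$ of $\xi^i_c\dfn\F(\mathit{pos}_i\land\psi^i_{c,r})\land\F(\mathit{pos}_i\land\psi^i_{c,d})$; chain the $\size{C}$ bags of a fixed $i$ while distributing the inner conjunction nodes over adjacent bags, so that the factor $\size{C}$ never enters a single bag; and place the outer $k^2$-conjunction together with $q_\text{border}$ and $q_\text{lastrow}$ into every bag. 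The genuinely new ingredient is the single marker node $\mathit{pos}_j$, which is incident to its entry occurrence at position $j$ and to the navigation occurrences at positions $j-1$ and $j-k$; to keep the bags containing it connected it must be carried across at most $k+1$ consecutive position blocks, so at any bag only $\bigO{k}$ distinct markers are simultaneously active. Ordering the position blocks $1,\dots,k^2$ along the path therefore yields pathwidth $\bigO{k^2}$, independent of $\size{C}$ and $\size{D}$, and together with $\td=2$ this makes the above map an fpt-reduction. This establishes $\coW{1}$-hardness for $(\LTLMC(\F),\td+\pw_\varphi)$, and since pathwidth bounds treewidth also for $(\LTLMC(\F),\td+\tw_\varphi)$; the two equivalences finally transfer both statements to the cases $\G\in T$ and $\U\in T$.
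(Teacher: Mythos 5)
Your proposal is correct and follows essentially the same route as the paper: the paper likewise adapts the reduction of \Cref{thm:ltl-mc-is-w1-hard} by labelling a fresh depth proposition $d_i$ (your $\mathit{pos}_i$) in every world $w^i_d$, replacing the exact $\X$-jumps by $\F$-jumps to these uniquely occurring markers at constant temporal depth and pathwidth overhead $\bigO{k^2}$, and then transferring the result to $\G$ and $\U$ via $\F\alpha \equiv \neg\G\neg\alpha$ and $\F\alpha \equiv \top\U\alpha$. The only deviations are cosmetic and immaterial: the paper omits the redundant marker conjunct inside the inner $\F$ (the position-indexed colour variables already pin down the target world), handles the last row by a syntactic case split to $\top$ rather than your $q_\text{lastrow}$ disjunct, and uses one outer $\F$ per position rather than per position--colour pair.
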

\begin{proof}
We adapt the reduction given in the proof of \Cref{thm:ltl-mc-is-w1-hard}.
First label a new depth proposition $d_i$ in every world $w^i_{d}$ for $d \in D$, $1 \leq i \leq k^2$. Then change the formulas as follows:
\begin{align*}
\psi^i_{c,r} &\dfn \left[q_\text{border} \lor %
\left(c^i_r \rightarrow \F( c^{i+1}_l )\right)\right]\\
\psi^i_{c,d} &\dfn \begin{cases}\left[c^i_d \rightarrow \F\left(c^{i+k}_u\right) \right] \text{ if $i + k \leq k^2$}\\\top \qquad \qquad \qquad \, \text{ otherwise}
\end{cases}\\
\psi &\dfn \bigwedge_{i = 1}^{k^2 - 1} \left[ \F \bigg(d_i \land \bigwedge_{c \in C} \left(\psi^i_{c,r} \land \psi^i_{c,d} \right) \bigg) \right]
\end{align*}

This does increase the pathwidth at most by $k^2$ for $d_1, \ldots, d_{k^2}$.
Also $\F\alpha$ can be replaced by $\neg \G \neg \alpha$ and $\top \U \alpha$ for the remaining cases.
\end{proof}
%


\begin{define}
The problem \textsc{RectangleTiling} contains the tuples $(C, c_0, c_1, D)$ for which there is an $m \in \N$ such that the $\size{D} \times m$-grid has a valid $D$-tiling $\gamma$ with the color $c_0$ at the top edge and $c_1$ at the bottom edge.


\end{define}

\begin{proposition}[\cite{tilingClassical}]
The problem \textsc{RectangleTiling} is \emph{$\PSPACE$}-complete.
\end{proposition}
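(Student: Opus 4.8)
The plan is to prove the two directions separately. For membership in $\PSPACE$, I would view each row of the $\size{D} \times m$ grid as a horizontally consistent word of length $\size{D}$ over $D$, which is storable in $\bigO{\size{D}\log\size{D}}$ bits, i.e.\ polynomial space. Two rows stack legally exactly when the \emph{down} colors of the upper one match the \emph{up} colors of the lower one, so the question becomes a reachability problem in the (exponentially large) graph whose vertices are the horizontally consistent rows and whose edges join vertically compatible rows: can a row all of whose \emph{up} colors are $c_0$ reach a row all of whose \emph{down} colors are $c_1$? A nondeterministic machine solves this by keeping only the current row in memory, repeatedly guessing a compatible successor row, and accepting as soon as a row with bottom color $c_1$ appears. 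Since there are at most $\size{D}^{\size{D}}$ distinct rows, every positive instance admits a tiling of height at most $\size{D}^{\size{D}}$ (a repeated row would yield a removable cycle), so a counter of $\bigO{\size{D}\log\size{D}}$ bits bounds the search and guarantees termination. This is an $\mathrm{NPSPACE}$ procedure, and Savitch's theorem yields membership in $\PSPACE$.

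For $\PSPACE$-hardness I would reduce from the word problem of a fixed $\PSPACE$-complete language $L$, decided by a deterministic Turing machine that, with its input $x$ hardcoded into the finite control, starts on the blank tape and runs in space bounded by a polynomial $s(\size{x})$; write $s$ for this bound. Each configuration is encoded as one row of width $s$: the tile in column $j$ carries the tape symbol of cell $j$ together with a flag recording whether the head sits there and, if so, the current state. Because one computation step alters a cell depending only on that cell and its two neighbours, the passage from a configuration to its successor is enforced entirely by local \emph{vertical} matching between adjacent rows, while \emph{horizontal} matching transports the head flag correctly along a row. The uniform top color $c_0$ forces the first row to be the initial configuration: I let $c_0$ be the \emph{up} color of precisely the blank non-head tiles and of a single start tile carrying the head in the initial state, and I give that start tile a \emph{left} color that no tile carries on its \emph{right}, so it can only occupy the free left boundary, pinning the head to column $1$. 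Symmetrically, $c_1$ is made the \emph{down} color only of accepting tiles, so a valid tiling of some height exists iff the machine reaches an accepting configuration, i.e.\ iff $x \in L$.

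The step requiring the most care, and the main obstacle, is that the definition fixes the grid width to $\size{D}$ rather than to the space bound $s$. I would reconcile this by padding: the simulation uses some polynomial number $t$ of genuine tiles, and I add $s - t$ `dead' tiles all four of whose sides carry a fresh color occurring nowhere else, so that such a tile can never be placed next to a genuine tile (its matching neighbour would have to show the fresh color, which is impossible) and hence never appears in any valid tiling. Choosing $s$ as a large enough polynomial guarantees $s \geq t$, after which the padding makes $\size{D} = s$ exactly while leaving tileability untouched. The only remaining work is the routine verification that the local horizontal and vertical constraints faithfully mirror the machine's transition relation and that the whole encoding is computable in polynomial time; together with the membership argument this establishes $\PSPACE$-completeness.
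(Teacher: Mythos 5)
The paper offers no proof of this proposition at all: it is imported wholesale from \citeauthor{tilingClassical} \cite{tilingClassical}, so there is no ``paper proof'' to match, and your job was to reconstruct the classical argument. Your reconstruction is correct and is essentially the standard one. The membership direction (guess rows of width $\size{D}$ one at a time, bound the height by $\size{D}^{\size{D}}$ via excising a repeated row, then apply Savitch) is sound; note the excision argument works because vertical compatibility only relates adjacent rows, and the $c_0$/$c_1$ boundary conditions sit on the first and last rows, which survive the cut. For hardness you correctly spotted the two wrinkles specific to this paper's formulation that a textbook corridor-tiling proof does not address: first, the grid width is fixed to $\size{D}$ rather than being a free parameter, and your padding with dead tiles carrying fresh colors handles it cleanly (no dead tile can sit in the top row, since its \emph{up} color is not $c_0$, and by induction on rows none can appear below either, so tileability is untouched while $\size{D}$ is inflated to the space bound $s$); second, the top edge is a single uniform color, so the input cannot be spelled out along the first row, which is exactly why hardcoding $x$ into the finite control is needed, and the resulting state set of size $\bigO{\size{x}}$ keeps the tile set and the reduction polynomial. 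One detail worth making explicit: with $c_0$ also the \emph{up} color of blank non-head tiles, an all-blank (headless) first row is a legal start, but since headless rows propagate only headless rows and $c_1$ occurs only on accepting tiles, such tilings can never meet the bottom boundary condition, so no false positives arise; similarly, since the bottom edge must show $c_1$ in \emph{every} column, you should (as is routine) have the accepting machine sweep into a unique final configuration matched by a dedicated last row of tiles whose \emph{down} colors are all $c_1$. These are finishing touches, not gaps.
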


\begin{theorem}\label{thm:ltl-xf-parapspace-h}
If $\{\X,\F\} \subseteq T$ or $\{\X,\G\} \subseteq T$, then the problems \emph{$(\LTLMC(T), \pw_\varphi)$} and \emph{$(\LTLMC(T), \tw_\varphi)$} are \emph{para-$\PSPACE$}-complete.
\end{theorem}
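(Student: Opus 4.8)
Membership in para-$\PSPACE$ is immediate from \Cref{thm:ltlmc-in-pspace}: since $\LTLMC(T) \subseteq \LTLMC(\X,\F,\G,\U) \in \PSPACE$ and every $\PSPACE$ problem lies in para-$\PSPACE$ (take the bounding function constant), both $(\LTLMC(T),\pw_\varphi)$ and $(\LTLMC(T),\tw_\varphi)$ are in para-$\PSPACE$. For the hardness I would invoke \Cref{thm:slice-hardness}: it suffices to find a constant $c^\ast$ such that the union of the slices of pathwidth at most $c^\ast$ is $\PSPACE$-hard. Because $\PSPACE$ is closed under complementation, I would obtain this by reducing the $\PSPACE$-complete problem \textsc{RectangleTiling} to the \emph{complement} of $\LTLMC(T)$, where the reduction always outputs a formula of pathwidth bounded by the absolute constant $c^\ast$; as $\tw_\varphi \le \pw_\varphi$, the very same images also witness hardness of the treewidth variant.

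The reduction would serialise a tiling of the $\size{D} \times m$ grid into a single run, as in the proof of \Cref{thm:ltl-mc-is-w1-hard}, but with two changes that keep the \emph{formula} small while leaving the instance unbounded. First, since the width equals the constant $\size{D}$, I would build a Kripke structure $\calS$ that reads one row of $\size{D}$ tiles and then \emph{loops} back to read a further row, so that grids of arbitrary height $m$ correspond to arbitrarily long runs ending in a sink world $w_\text{end}$. A world $w^p_d$ (column $p \in [\size{D}]$, tile $d$) carries propositions naming the four colours of $d$ together with a right-border marker; the top and bottom edge colours $c_0, c_1$ are enforced directly in $\calS$ by admitting only matching tiles in the first, resp.\ last, row. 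Second, in place of the explicit conjunction over the $k^2$ positions used in \Cref{thm:ltl-mc-is-w1-hard}, I would quantify over \emph{all} positions of the run by a single outermost $\G$. Writing $r_c, l_c, u_c, b_c$ for the propositions ``the current tile has right, left, up, down colour $c$'', the validity formula is, up to the border and last-row guards dropped here for readability,
\[
\psi \dfn \G\, \bigwedge_{c \in C}\Big[(r_c \to \X\, l_c) \land (b_c \to \X^{\size{D}} u_c)\Big] \,\land\, \F\, q_\text{end},
\]
where the $\X$-step enforces the horizontal match to the next tile in the row, the $\X^{\size{D}}$-jump enforces the vertical match to the tile one row below, and $\F q_\text{end}$ forces the run to complete a finite grid. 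One then has $(C,c_0,c_1,D) \in \textsc{RectangleTiling}$ iff some run of $\calS$ satisfies $\psi$ iff $(\neg\psi, \calS, w_\text{start}) \notin \LTLMC(T)$, and $\calS$ and $\psi$ are constructible in time polynomial in $\size{D}$.

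The step I expect to be the main obstacle is bounding the structural pathwidth of $\neg\psi$. Note first that the temporal depth is now $\size{D} + \bigO{1}$ and hence \emph{unbounded}; this is unavoidable (cf.\ the remark after \Cref{thm:ltl-mc-is-w1-hard}) but harmless, since $\td$ is not a parameter here. For the pathwidth the key observation is that each colour proposition $r_c, l_c, u_c, b_c$ occurs exactly once in $\psi$, namely inside its own $c$-conjunct, so the big conjunction $\bigwedge_{c\in C}$ glues together \emph{variable-disjoint} blocks. Each block is a caterpillar whose spine is the $\X^{\size{D}}$-chain, so it has constant pathwidth, and the chain itself (being a path) contributes only a constant to every bag. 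Concatenating the per-colour bags along the left-associated conjunction while carrying the running conjunction node and the \emph{constantly many} shared markers ($q_\text{end}$, the border marker, and the outermost $\G$ and $\F$ nodes) therefore yields a path-decomposition of width at most an absolute constant $c^\ast$, independent of the instance; this is exactly the point where \textsc{RectangleTiling} beats \textsc{SquareTiling}, whose position-indexed conjunction forced pathwidth $\bigO{k^2}$. Finally, only one of $\F, \G$ is guaranteed to be present, so in the $\{\X,\G\}$ case I would rewrite $\F q_\text{end}$ as $\neg\G\neg q_\text{end}$ and in the $\{\X,\F\}$ case the outer $\G$ as $\neg\F\neg$, exactly as in the earlier reductions; these rewrites only prepend negations and do not affect the pathwidth bound. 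With every reduction image lying in the slices of pathwidth (and treewidth) at most $c^\ast$, \Cref{thm:slice-hardness} lifts the $\PSPACE$-hardness of this finite union of slices to para-$\PSPACE$-hardness, completing both claims.
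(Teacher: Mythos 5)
Your proposal is correct and follows essentially the same route as the paper's proof: a reduction from the $\PSPACE$-complete \textsc{RectangleTiling} to the complement of the model checking problem, a Kripke structure with a back edge that models rows of width $\size{D}$ as repeated traversals, colour-indexed (not position-indexed) propositions checked under a single outer $\G$ with $\X$-chains of length about $\size{D}$ for vertical matching, a constant-pathwidth decomposition obtained from variable-disjoint colour blocks whose $\X$-chains are paths plus constantly many shared markers added to every bag, the operator rewrites $\F = \neg\G\neg$ and $\G = \neg\F\neg$, and the conclusion via \Cref{thm:slice-hardness} and \Cref{thm:ltlmc-in-pspace}. The only deviations are cosmetic: you enforce the boundary colours $c_0, c_1$ structurally (which tacitly requires separate first- and last-row copies of the row gadget rather than a single looped one---a routine unrolling you should make explicit), whereas the paper enforces them by the formulas $\psi_\text{first}$ and $\psi_\text{last}$, and your direct back edge yields vertical distance $\size{D}$ in place of the paper's $n+2$ via $w_\text{right}$ and $w_\text{left}$.
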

\begin{proof}
We consider a $\leqpm$-reduction from \textsc{RectangleTiling} to $\LTLMC(\X, \F)$ such that only LTL formulas with constant pathwidth (and therefore constant treewidth) are produced. This proves the theorem according to \Cref{thm:slice-hardness} and \Cref{thm:ltlmc-in-pspace}.
This reduction originally from \citeauthor{SC85} is to show the $\PSPACE$-hardness of general LTL model checking. We modify it to obtain a constant pathwidth.

Write the shortcut $n \dfn \size{D}$.
Then similar to the proof of \Cref{thm:ltl-mc-is-w1-hard} we construct a Kripke structure $\calS$ that models $n \times m$-tilings as runs:
\begin{itemize}
  \item Add worlds $w_\text{left}$, $w_\text{right}$ and $w_\text{end}$
  which each have only one proposition labeled, namely $q_\text{left}$, $q_\text{right}$, and $q_\text{end}$.
  \item For every tile $d \in D$ and every $i \in [n]$ add a world $w^i_{d}$.
  \item Connect $w_\text{left}$ to $w^1_{d}$ for every $d \in D$.
  \item Connect $w^{k}_{d}$ to $w_\text{right}$ for every $d \in D$.
  \item Connect $w^i_{d}$ to $w^{i+1}_{d'}$ for every $d, d' \in D$ and $1 \leq i < n$.
  \item Connect $w_\text{right}$ to $w_\text{end}$, $w_\text{right}$ to $w_\text{left}$ and $w_\text{end}$ to itself.
  \item In every world $w^i_{d}$ with $d = \langle c_u, c_d, c_l, c_r \rangle$
  label propositional variables $c_u$, $c_d$, $c_l$, $c_r$.
\end{itemize}

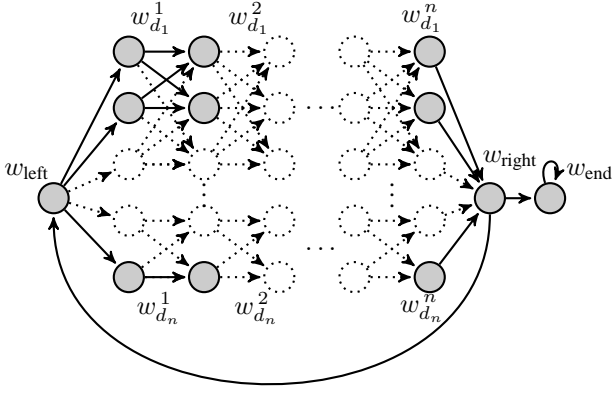
\begin{figure}
\centering
\begin{tikzpicture}[->,>=stealth',shorten >=1pt,auto,node distance=0.8cm,%
        thick, %
        world/.style={circle,fill=black!20,draw,minimum size=0.4cm,inner sep=0pt}, %
        world2/.style={circle,dotted,draw,minimum size=0.4cm,inner sep=0pt}, %
        dummy/.style={inner sep=0pt,node distance=0.8cm},yscale=0.75]

\node[world] (s0) at (0cm,-3.6cm) {};
\node[world] (sEnd) at (5.8cm,-3.6cm) {};
\node[world] (sEndEnd) at (6.6cm,-3.6cm) {};
\node[dummy] (l0) [node distance=0.5cm, above left of=s0] {$w_\text{left}$};
\node[dummy] (ll0) [node distance=0.5cm, above of=sEnd] {$\;\;\quad w_\text{right}$};
\node[dummy] (lll0) [node distance=0.5cm, above right of=sEndEnd] {$\quad w_\text{end}$};

\node[dummy] (dots) at (3.55cm,-2.0cm) {$\ldots$};
\node[dummy] (dots2) at (3.55cm,-4.5cm) {$\ldots$};
\node[dummy] (dots3) at (2cm,-3.4cm) {$\vdots$};
\node[dummy] (dots4) at (4.5cm,-3.4cm) {$\vdots$};

\foreach \i in {1,...,2}{
\foreach \d in {1,...,2}{
\node[world] (s\i\d) at (1cm * \i, -1cm * \d) {};
}}
\foreach \i in {1,...,2}{
\node[world] (s\i5) at (1cm * \i, -1cm * 5) {};
}
\foreach \i in {1,...,5}{
\node[world2] (s\i3) at (1cm * \i, -1cm * 3) {};
\node[world2] (s\i4) at (1cm * \i, -1cm * 4) {};
}
\foreach \d in {1,...,2}{
\node[world2] (s3\d) at (1cm * 3, -1cm * \d) {};
\node[world2] (s4\d) at (1cm * 4, -1cm * \d) {};
\node[world] (s5\d) at (1cm * 5, -1cm * \d) {};
}

\node[world2] (s35) at (1cm * 3, -1cm * 5) {};
\node[world2] (s45) at (1cm * 4, -1cm * 5) {};
\node[world] (s55) at (1cm * 5, -1cm * 5) {};

\node[dummy] (l51) at (1cm * 5 - 0.1cm, -1cm * 1 + 0.6cm) {$w^{\,n}_{d_1}$};
\node[dummy] (l55) at (1cm * 5 - 0.1cm, -1cm * 5 - 0.55cm) {$w^{\,n}_{d_n}$};
\foreach \i in {1,...,2}{
\node[dummy] (l\i1) at (1.3cm * \i, -1cm + 0.6cm) {$w^{\,\i}_{d_1}$};
\node[dummy] (l\i5) at (1.3cm * \i + 0.1cm, -1cm * 5 - 0.55cm) {$w^{\,\i}_{d_n}$};
}

\foreach \d in {1,...,2}{
\foreach \e in {1,...,2}{
\path[->] (s1\d) edge node{} (s2\e);
}}

\path[->,dotted]
(s0) edge node{} (s14)
(s0) edge node{} (s13);
\path[->] (s0) edge node{} (s15);

\foreach \i in {1,...,1}{
\foreach \d in {1,...,3}{
\pgfmathparse{int(\i + 1)}
\path[->,dotted] (s\i\d) edge node{} (s\pgfmathresult3);
}}

\foreach \d in {1,...,3}{
\foreach \e in {1,...,3}{
\path[->,dotted] (s2\d) edge node{} (s3\e);
\path[->,dotted] (s4\d) edge node{} (s5\e);
}}

\foreach \d in {4,...,5}{
\foreach \e in {4,...,5}{
\path[->,dotted] (s4\d) edge node{} (s5\e);
}}

\foreach \i in {1,...,2}{
\foreach \d in {4,...,5}{
\foreach \e in {4,...,5}{
\pgfmathparse{int(\i + 1)}
\path[->,dotted] (s\i\d) edge node{} (s\pgfmathresult\e);
}}}

\foreach \d in {1,...,2}{
\path[->] (s0) edge node{} (s1\d);
\path[->] (s5\d) edge node{} (sEnd);
}

\path[->]
(s15) edge node{} (s25)
(s55) edge node{} (sEnd)
(sEnd) edge node{} (sEndEnd)
(sEnd) edge[bend left=90,distance=4cm] node{} (s0)
(sEndEnd) edge [loop above] node{} (sEndEnd);

\path[->,dotted]
(s53) edge node{} (sEnd)
(s54) edge node{} (sEnd);

\path[->,dotted] (s13) edge node{} (s21)
(s13) edge node{} (s22)
(s13) edge node{} (s23);
\end{tikzpicture}

\vspace{-30pt}
\caption{Structure that models rectangle tilings as runs.}\label{fig:ltlmc-parapsp-tiling}
\end{figure}

The structure $\calS$ is shown in \Cref{fig:ltlmc-parapsp-tiling} and models tilings as follows: A run $\pi$ starts in $w_\text{left}$ and visits a row of $n$ worlds $w^i_d$. These worlds are the first row of the tiling. In every of the $n$ steps, $\pi$ may decide for any of the $\size{D}$ possible successors (which correspond to tiles). The back edge from $w_\text{right}$ to $w_\text{left}$ may be used then an arbitrary number of times, constructing a tiling consisting of many rows. The path may then enter the state $w_\text{end}$ and stay there forever.

We use the following formulas to check if the tiling is valid. First ensure that the complete first row has \emph{up} color $c_0$:
\[
\psi_\text{first} \dfn \bigwedge_{i = 1}^n \X^i (c_0)_u
\]
Check the neighbor to the right and below (if it is not the border):
\begin{align*}
\psi_{c,r} &\dfn \left[ c_r \rightarrow \X\left( q_\text{right} \lor c_l \right) \right]\\
\psi_{c,d} &\dfn \left[ c_d \rightarrow \X^{n+2}\left(q_\text{end} \lor c_u \right)  \right]
\end{align*}
The last row must exist and have \emph{down} color $c_1$:
\[
\psi_\text{last} \dfn \F \left[q_\text{left} \land \left(\X^{n+2} q_\text{end}\right)
\land \bigwedge_{i = 1}^{n} \X^i (c_1)_d  \right]
\]
The whole tiling is expressed by $\psi$:
\[
\psi \dfn \psi_\text{first} \land \psi_\text{last} \land
\neg \F \neg \bigwedge_{c \in C}\left(\psi_{c,r} \land \psi_{c,d} \right)
\]

Similar to \Cref{thm:ltl-mc-is-w1-hard} is the case that there is a valid $n\times m$-tiling if and only if a path starts in $w_\text{left}$ and satisfies $\psi$.

\begin{claim}
The formula $\psi$ has constant structural pathwidth.
\end{claim}
\begin{proof2}
We construct a path-decomposition $\calP$ of $\psi$ as follows.
Ignore the variables $(c_0)_u, (c_1)_d, q_\text{left}, q_\text{right}$ and $q_\text{end}$ as they can be added to every bag at the end, increasing every bag size only by five.

Now first process the formula $\psi_\text{first}$. It contains $n$ ``chains'' of $\X$-operators. For each such chain create a row of bags. For $j = 1,\ldots,n-1$ then add the $j$-th and the $(j+1)$-th $\X$-operator node to the $j$-th bag, so the edge in the syntactical structure between them is covered. Then the big conjunction is handled as in the proof of \Cref{thm:ltl-mc-is-w1-hard}, connecting the $n$ rows of bags to a single path, increasing the width by at most two.

The formulas $\psi_{c,r}$ have each constant length, so for every $c \in C$ put all of the nodes of subformulas of $\psi_{c,r}$ into a single bag and append it to $\calP$. This does not violate the path-decomposition rules as every variable $c_r, c_l$ appears only once in the whole formula $\psi$.

The length of $\psi_{c,d}$ depends on $n$, but the chain of $n+2$ $\X$-operators can be decomposed like in $\psi_\text{first}$, leading to $n+2$ new bags with each constant size.

The length of $\psi_\text{last}$ is again not constant; it contains a big conjunction of chains of $\X$-operators as well as another single chain with $q_\text{end}$ inside. Decompose the chains and the big conjunction as in $\psi_\text{first}$. The edges connecting them to the remaining number of constantly many $\land$-nodes and the $\F$ node can then be covered by adding the nodes to every bag, increasing the size only by a constant.

Finally for covering the whole formula $\psi$ in $\calP$, we need to insert the remaining small $\land$-operators, negations, and the $\F$ into every bag; and to decompose the big conjunction for every color $c$ append the bags of the $\psi_{c,d}$ formulas in the right order, again adding the small conjunction parts of the big conjunction. \qedBlack
\end{proof2}
%
From the above claims and \Cref{thm:ltlmc-in-pspace} the para-$\PSPACE$-completeness follows.
\end{proof}


\begin{theorem}\label{thm:ltl-mc-u-is-parapsp-hard}
\emph{$(\LTLMC(\U), \td + \pw_\varphi)$} and \emph{$(\LTLMC(\U), \td + \tw_\varphi)$} are \emph{para-$\PSPACE$}-complete.
\end{theorem}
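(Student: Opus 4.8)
The plan is to treat membership and hardness separately. Membership in para-$\PSPACE$ is immediate: by \Cref{thm:ltlmc-in-pspace} the unparameterised problem $\LTLMC(\U)$ already lies in $\PSPACE$, and since $\PSPACE$ is robust, equipping it with the computable parameter $\td + \pw_\varphi$ keeps it in para-$\PSPACE$. For the hardness I would invoke \Cref{thm:slice-hardness}: it suffices to give a classical polynomial-time reduction from the $\PSPACE$-complete problem \textsc{RectangleTiling} to $\LTLMC(\U)$ whose output formulas all have \emph{constant} temporal depth \emph{and} constant structural pathwidth, so that their images fall into finitely many slices. As in \Cref{thm:ltl-mc-is-w1-hard}, runs will encode tilings and I will reduce to the complement of universal model checking (some run satisfies the check iff a valid tiling exists); this suffices because $\PSPACE$ is closed under complementation.

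The construction would reuse the Kripke structure of \Cref{thm:ltl-xf-parapspace-h}, which serialises an $\size{D} \times m$ tiling into runs by laying out each row as $\size{D}$ tile worlds $w^i_d$ delimited by the markers $w_\text{left}, w_\text{right}, w_\text{end}$ and permitting arbitrarily many rows via the back edge $w_\text{right} \to w_\text{left}$. Two changes trade the depth-$\bigO{\size{D}}$ chains $\X^{n+2}$ of that proof for constant depth while preserving constant pathwidth. First, I would \emph{index the colour propositions by column}, writing $c^j_u, c^j_d, c^j_l, c^j_r$ and labelling them in every column-$j$ world irrespective of its row; this device (already used in \Cref{thm:ltl-mc-is-w1-hard}) localises each colour variable to a single column block of the syntax graph. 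Second, I would add a column marker $\text{col}_j$ to each column-$j$ world and replace step counting by \emph{until-navigation}: the immediate horizontal successor is reached by $(\neg\text{col}_{j+1}) \U (\text{col}_{j+1} \land c^{j+1}_l)$, guarded by $\G$ and a conjunction over $c \in C$, exactly in the spirit of \Cref{thm:ltl-mc-u-w1-hard}.

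The hard part will be the vertical constraint, which compares a tile with the tile one full row --- hence about $\size{D}$ steps --- below it. With only $\U$ available this distance must be crossed without counting, by reaching the strictly next column-$j$ world; but the reflexive semantics of $\U$ (and of $\F$) make the strictly-next occurrence difficult to isolate, which is precisely the obstacle flagged for \Cref{thm:ltl-f-tw-even-odd-conph}. I would resolve it with the \emph{alternation trick}: a row-parity bit $\text{par}_0/\text{par}_1$ is added to the structure (doubling the tile worlds and routing the back edge so that consecutive rows carry opposite parity). Then, evaluated at a column-$j$ world of parity $b$, the formula $\big(\neg(\text{col}_j \land \text{par}_{1-b})\big) \U \big((\text{col}_j \land \text{par}_{1-b} \land c^j_u) \lor q_\text{end}\big)$ skips the present world (whose parity is $b$, not $1-b$) and lands on the column-$j$ world of the \emph{next} row to check its up-colour, or reaches $w_\text{end}$ if no further row exists. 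The same parity-until idiom lets a conjunct $\psi_\text{last}$ detect the last-row cells --- those from which $q_\text{end}$ is reached with no intervening opposite-parity $\text{col}_j$ --- and demand the bottom border colour $c_1$ there, while $\psi_\text{first}$ fixes the top colour $c_0$ by $(\neg\text{col}_j) \U (\text{col}_j \land (c_0)^j_u)$ from $w_\text{left}$. Every conjunct then has temporal depth two.

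It would remain to verify two claims mirroring \Cref{thm:ltl-mc-is-w1-hard} and \Cref{thm:ltl-xf-parapspace-h}: that a run from $w_\text{left}$ satisfies the whole conjunction exactly when the tiling it spells out is valid with the prescribed borders, and that the formula has constant $\td$ and constant $\pw_\varphi$. For the pathwidth bound I would order the path-decomposition by column blocks, so that each column-indexed colour and each $\text{col}_j$ occupies only $\bigO{1}$ adjacent bags, spread the two big conjunctions along the path as in \Cref{thm:ltl-xf-parapspace-h}, and add the finitely many globally shared nodes ($q_\text{end}$, $\text{par}_0$, $\text{par}_1$, the outer $\G$) to every bag, contributing only a constant; rewriting $\F\alpha \equiv \top \U \alpha$ and $\G\alpha \equiv \neg(\top \U \neg\alpha)$ then expresses everything with $\U$ alone. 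I expect the single genuine difficulty to be achieving constant temporal depth \emph{and} constant pathwidth for the vertical check simultaneously; the parity alternation is exactly what makes both attainable at once, and together with \Cref{thm:ltlmc-in-pspace} and \Cref{thm:slice-hardness} it yields para-$\PSPACE$-completeness.
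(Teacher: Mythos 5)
Your proposal is correct and shares the paper's overall skeleton: membership via \Cref{thm:ltlmc-in-pspace} and robustness, hardness via \Cref{thm:slice-hardness} from \textsc{RectangleTiling}, the Kripke structure of \Cref{thm:ltl-xf-parapspace-h}, column-indexed colour propositions as in \Cref{thm:ltl-mc-u-w1-hard}, constant-depth $\U$-formulas, and the same pathwidth bookkeeping (per-column/per-colour isolated bags, globally shared variables added to every bag, big conjunctions chained along the path). The genuine divergence is the gadget isolating the cell one row below. The paper leaves the structure unchanged and navigates purely in the formula by a nested until, $\psi^i_{c,d} \dfn c^i_d \rightarrow [\, c^i_d \U (\neg c^i_d \U (q_\text{end} \lor c^i_u)) \,]$, using the column-indexed proposition $c^i_d$ itself as delimiter (horizontally $c^i_r \U c^{i+1}_l$; borders via $\psi_\text{first} \dfn [q_\text{left} \lor (c_0)_u] \U q_\text{right}$ and $\psi_\text{last} \dfn \top \U [q_\text{left} \land [(q_\text{left} \lor (c_1)_d) \U q_\text{right}]]$), whereas you modify the structure, doubling the tile worlds with a row-parity bit and adding column markers, and then use one flat parity-until per check. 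Both routes keep temporal depth and $\pw_\varphi$ constant — your extra propositions $\text{par}_0$, $\text{par}_1$ are global and the $\text{col}_j$ are column-local, so bag sizes stay $\bigO{1}$ — hence both yield the theorem. What your version buys is a tighter correctness argument: the parity-until lands deterministically on the strictly next column-$j$ world, avoiding exactly the reflexivity corner cases that the paper's nested until must argue away but does not spell out (for instance, if a tile in column $i$ has both up- and down-colour $c$, the inner until of the paper's $\psi^i_{c,d}$ is already satisfied at the present world, and if the next row's column-$i$ cell carries neither $c^i_u$ nor $c^i_d$, the inner until can skip ahead to a later row with matching up-colour). What the paper's version buys is economy: no change to $\calS$, fewer propositions, and subformulas $\psi^i_{c,r}$, $\psi^i_{c,d}$ so small that each fits in a single isolated bag, making the path-decomposition argument nearly immediate.
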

\begin{proof}
For the Until operator the reduction from the proof of \Cref{thm:ltl-xf-parapspace-h} is possible in constant temporal depth. Similar to the proof of \Cref{thm:ltl-mc-u-w1-hard} adapt the structure $\calS$ and supplement the labeled color variables $c_u, c_d, c_l, c_r$ by their depth-aware versions, \ie, $c^i_u, c^i_d, c^i_l$ and $c^i_r$ for $1 \leq i \leq n$.

Modify the formulas as follows:
\begin{align*}
\psi_\text{first} &\dfn \left[q_\text{left} \lor (c_0)_u\right]\U q_\text{right}\\
\psi_\text{last} &\dfn \top \U \left[q_\text{left} \land [\left(q_\text{left} \lor (c_1)_d \right)\U q_\text{right}] \right]\\
\psi^i_{c,r} &\dfn c^i_r \rightarrow \left[c^i_r \U c^{i+1}_l\right]\\
\psi^i_{c,d} &\dfn c^i_d \rightarrow \left[c^i_d \U \left( \neg c^i_d \U (q_\text{end}\lor c^i_u) \right) \right]\\
\psi &\dfn \psi_\text{first} \land \psi_\text{last} \land \neg \bigg[\top \U \neg
\bigwedge_{i = 1}^n \bigwedge_{c \in C} \left( \psi^i_{c,r} \land \psi^i_{c,d} \right)\bigg]
\end{align*}

The variables $(c_0)_u, (c_1)_d, q_\text{left}$, and $q_\text{right}$ can again be added to every bag of a path-decomposition $\calP$ of $\psi$. The only part of $\psi$ that is not constant is the conjunction over the $n \cdot \size{C}$ subformulas $\psi^i_{c,r}$ and $\psi^i_{c,d}$. But each such subformula $\psi^i_{c,r}$, resp., $\psi^i_{c,d}$ can be covered by a single isolated bag: It has only a constant number of nodes and every occurring variable is either subformula-local or is already added to every bag.

Then it remains to decompose the big conjunctions which can be done in two steps. First connect the isolated bags for the inner conjunction and add the small conjunction nodes as needed. Then connect the resulting chains of length $\size{C}$ to finalize the path-decomposition $\calP$ that has a constant width.
\end{proof}

\newcommand{\paranph}{\text{para-}\NP\text{-h.}}
\newcommand{\paranpc}{\text{para-}\NP\text{-c.}}
\newcommand{\paraconph}{\text{para-}\coNP\text{-h.}}
\newcommand{\paraconpc}{\text{para-}\coNP\text{-c.}}

\newcommand{\parapspaceh}{\text{para-}\PSPACE\text{-h.}}
\newcommand{\parapspacec}{\text{para-}\PSPACE\text{-c.}}

\begin{figure}[!ht] \centering\small
\begin{tabular}{lll}\toprule
Problem $Q$ & Parameter $\kappa$ & \\ \midrule
$\LTLSAT(\cdot)$ & $\td$ & $(\td\; +\; )\tw_\varphi$/$\,\pw_\varphi$\\
\midrule
$\emptyset$ or $\X$
& $\paranpc$ 
& $\FPT$ \\
$\F$
& $\paranpc$ 
& $\W{1}$-h.\\
other 
& $\parapspacec$ 
& $\W{1}$-h. \\
\midrule
$\LTLMC(\cdot)$ & $\td$       & $\td + \Delta (+ \tw_\calA$/$\,\pw_\calA)$  \\
\midrule
$\X$   
& $\coW{\P}$, $\coW{1}$-h.
& $\FPT$        \\
$\F$   
& $\paraconpc$ 
& $\paraconpc$ \\
other 
& $\parapspacec$ 
& $\parapspacec$   \\
\midrule
& $\tw_\varphi$ / $\nVar$ & $\td + \tw_\varphi$/$\,\pw_\varphi$ \\
\midrule
$\X$   & $\paraconpc$
&$\coW{\P}$, $\coW{1}$-h.\\
$\F$   & $\paraconpc$
& $\coW{1}$-h. \\
$\F, \X$ & $\parapspacec$
& $\coW{1}$-h.\\
other   & $\parapspacec$
& $\parapspacec$\\
\midrule
&$\size{\varphi}$ & \\
\midrule
all  & $\FPT$
& \\
 \bottomrule
\end{tabular}
\caption{Overview over LTL parameterisations with their complexity.}
\end{figure}

\section{Conclusion}\label{sec:conclusion}

We showed that the intractable satisfiability and model checking problems of LTL cannot be tamed by applying the toolbox of parameterised complexity theory, at least not for the chosen well-known parameters. The model checking hardness is solely caused by the path semantics of LTL in branching Kripke structures. This conclusion holds both for the full set of LTL operators and for every operator fragment; the only exception is the case with only the $\X$ operator and a special parameterisation that allows a depth-bounded search tree algorithm.
As LTL is a special case of \CTLstarText, the hardness results immediately hold for \CTLstarText as well.

A future research possibility is to continue the search for tractable parameters of LTL. 
The ultimate goal is to find a non-trivial parameter, \ie, lower than the one of \citeauthor{lichPnue85}, that allows fixed-parameter tractability for all LTL operators. A first step could be LTL instances of simultaneously bounded formula treewidth and input structure treewidth.

\section*{Acknowledgment}
The authors acknowledge the support by DFG grant ME 4279/1-1.
Also the authors thank the anonymous referees for their helpful comments and hints regarding existing work, as well as colleagues for useful suggestions.

\bibliographystyle{plainnat}


\end{document}